\documentclass[pdflatex,sn-mathphys-num]{sn-jnl}

\usepackage{graphicx}%
\usepackage{multirow}%
\usepackage{amsmath,amssymb,amsfonts}%
\usepackage{amsthm}%
\usepackage{mathrsfs}%
\usepackage[title]{appendix}%
\usepackage{xcolor}%
\usepackage{textcomp}%
\usepackage{manyfoot}%
\usepackage{booktabs}%
\usepackage{algorithm}%
\usepackage{algorithmicx}%
\usepackage{algpseudocode}%
\usepackage{listings}%

\theoremstyle{thmstyleone}%
\newtheorem{theorem}{Theorem}
\theoremstyle{thmstyletwo}%
\newtheorem{lemma}[theorem]{Lemma}

\theoremstyle{definition}

\newtheorem{example}{Example}

\theoremstyle{thmstylethree}%

\begin{document}

\title[An Efficient Construction of Completely Independent Spanning Trees in Dense Gaussian Networks]{An Efficient Construction of Completely Independent Spanning Trees in Dense Gaussian Networks}


\author*[1]{\fnm{Zaid} \sur{Hussain}}\email{zhussain@cs.ku.edu.kw}
\equalcont{These authors contributed equally to this work.}

\author[1]{\fnm{Fawaz} \sur{AlAzemi}}\email{fawaz@cs.ku.edu.kw}
\equalcont{These authors contributed equally to this work.}

\author[2]{\fnm{Bader} \sur{AlBdaiwi}}\email{bdaiwiV@cs.ku.edu.kw}
\equalcont{These authors contributed equally to this work.}

\affil[1]{\orgdiv{Computer Science Department}, \orgname{Kuwait University}, \orgaddress{\country{Kuwait}}}

\affil[2]{\orgdiv{Retired, Computer Science Department}, \orgname{Kuwait University}, \orgaddress{\country{Kuwait}}}


\abstract{Fault tolerance in routing and broadcasting is a critical aspect in ensuring the reliability and robustness of communication networks, particularly in environments prone to failures. This work presents an efficient method for constructing Completely Independent Spanning Trees (CISTs) within dense Gaussian networks, providing improved fault tolerance, reliability, and communication efficiency in large-scale interconnection systems. To construct the CISTs in the Gaussian network, we partition the network into sets, and accordingly the nodes are connected properly to form the first CIST and then rotated to get the second CIST with less depth than the existing state-of-art. To evaluate the performance of the proposed construction, we calculated the average maximum number of steps required to deliver a message from the root node to all other nodes in the network. A comparison with existing approaches shows that our construction outperforms them, achieving an improvement of at least 33\%.}

\keywords{Interconnection network, Spanning trees, routing, broadcasting, fault-tolerant.}



\maketitle

\section{Introduction}
\label{sec:introduction}
With  recent advancements in information technology (NLP, LLM, generative AI, etc) and the exponential growth of data, the demand for computation continues to rise rapidly. Cloud computing has become a pivotal solution to meet this escalating demand. Cloud computing resources are hosted in datacenters, where communication is facilitated by the underlying network infrastructure. Datacenters are designed to be horizontally scalable, allowing for the addition of more computing nodes to distribute the load effectively. As the number of nodes increases, the necessity for efficient and high-performance networks becomes critical. These networks serve as a backbone  and their performance directly impact the overall efficiency and effectiveness of the entire datacenter. Ensuring robust and efficient network infrastructure is paramount to maintain seamless operations and scalability of modern datacenters.

The choice of network topology is crucial for optimizing both the cost and performance of large-scale computing systems due to its direct impact on communication protocols. Various topologies have been proposed in the literature, including Torus \cite{torus}, Tofu\cite{tufo_interconnect_d}, dragonfly \cite{dragonfly}, Mesh, Eisenstein-Jacobi (EJ) \cite{Gaussian_and_EJ_bose}, Gaussian \cite{2006Gaussian}, BCube \cite{bcube}, and $k$-ary $n$-cube \cite{k_ary_n_cube}, hypercube \cite{hybercube}, etc. Each topology is suited for a set of requirements and scenarios and have unique properties in terms of symmetry, regularity, diameter, structure, average hop distance, number of independent spanning trees,  fault tolerance levels, etc. For example, the torus topology is well-known and has been employed to connect supercomputers such as IBM BlueGene\cite{bluegene} and Fugaku\cite{tufo} (ranked No. 1 on June 2020 on Top500 list \cite{TOP500}). Torus are symmetric and regular where they provide high performance and economic scalability due to their high paths diversity. This enables them to handle high traffic loads without performance degradation even in the presence of a limited number of faulty nodes or links. Similarly, the first Exascale Supercomputer Frontier \cite{frontier} (ranked No.1 on June 2024 Top500 List \cite{TOP500})   utilizes Dragonfly topology in combination with Slingshot interconnect \cite{slingshot} to improve the overall performance.   

The Gaussian network has been proposed in \cite{2006Gaussian}\cite{2008Gaussian} as a topology that shares many characteristics with the torus network such as symmetry and regularity. However, the Gaussian network has a smaller diameter and average hop distance when compared to the torus network. Therefore, offers an overall reduction in network latency and, as a result, a more attractive option. A dense Gaussian network is a specific instance of Gaussian network that offer a maximum number of nodes for a given diameter. Formal definitions and characteristics will be discussed in Section \ref{sec:bacground}.

Independent Spanning Trees (ISTs) are spanning trees rooted at the same node such that, for every other node in the network, the paths from the root to that node in different trees are internally node-disjoint except at the endpoints. This structural property ensures that even if some links or nodes fail, alternative paths still exist, allowing messages to be delivered reliably. In interconnection networks, ISTs play a crucial role in fault-tolerant routing and broadcasting, as they provide multiple independent communication paths between nodes. This redundancy improves the resilience of the network by minimizing the impact of failures, reducing congestion, and ensuring reliable message delivery through the network. More details on ISTs are provided in Section \ref{sec:literatureReview}, while their definition and variations are given in Section \ref{sec:bacground}.

In this paper, we present two efficient algorithms (one sequential and one parallel) for constructing CISTs in dense Gaussian networks. The time and communication complexities of the proposed sequential algorithm is, in respective order, $O(n)$ and $O(d)$ where $n$ is the number of the nodes in the network and $d$ is the maximum depth of the CISTs. Whereas, the time complexity of the parallel algorithm is $O(1)$ and its communication complexity is $O(1)$. For a given diameter $k$, both algorithms constructs CISTs with a lower tree depth of $d = 3k - 1$ when compared to Pai's algorithm \cite{pai2022configuring}  thereby reducing the average distance and the number of steps required to route a message. Additionally, we use CISTs to design a fault-tolerate routing protocol. Theoretical and experimental comparison of our algorithms with Pai's algorithm is also discussed.

The paper is organized as follows. The literature review is discussed in Section \ref{sec:literatureReview}. Section \ref{sec:bacground} provides necessary background and related works. The theoretical foundation for CISTs in dense Gaussian networks is presented in Section \ref{sec:CISTs}. Sequential and parallel algorithms are detailed in Section \ref{sec:CISTsConstruction} while reliable routing is discussed in Section \ref{sec:routingUsingCISTs}. All experimental results are analyzed in Section \ref{sec:experimentalResults} followed by the conclusion in Section \ref{sec:conclusion}.

\section{Literature Review}
\label{sec:literatureReview}
The Gaussian network was introduced in \cite{2008Gaussian,2006Gaussian} as a new interconnection topology to improve the performance and scalability of parallel systems \cite{dally2001}. Since their introduction, Gaussian networks have been extensively studied in the literature. For example, \cite{alsalah} presents an efficient algorithm for constructing 1-to-many node-disjoint paths. Additionally, an all-to-all broadcasting algorithm for on-chip interconnects is proposed in \cite{Zhang, touzene}. An 
$n$-dimensional variant of the Gaussian network is discussed in \cite{arash}.

Independent Spanning Trees (ISTs) are utilized in many network topologies and are defined as a set of spanning trees in a given network that share the same root node where the paths between the root node and any other node in different trees are node-disjoint except for the two paths' endpoints. This characteristic of ISTs helps to easily achieve reliability and security in communication protocols \cite{zaid2017, Chang2015, Yang2015, star}.  Edge-Disjoint Independent Spanning Trees (EDNISTs) are the same as ISTs except that the edges used in one tree are not used in any other trees \cite{bader2016}. The Completely Independent Spanning Trees (CISTs) are the same as EDNISTs except that the root node for each tree is distinct.

CISTs were first proposed by Hasunuma in \cite{hasunuma2001, hasunuma2002} and are more preferable than ISTs due to the flexibility for the selection of root nodes for each spanning tree. Similarly to ISTs and EDNISTs, CIST has a wide range of applications in communication protocols \cite{cist_ad_hoc,cistLine,pai2022configuring,cist_augmentedCube}. Recently, Pai et al. \cite{pai2022configuring} introduced a sequential algorithm for constructing CISTs in dense Gaussian networks. The algorithm begins by identifying a Hamiltonian cycle, then each spanning tree utilizes a subset of edges from this cycle along with an additional unique set of edges not included in the cycle. The specific steps of the algorithm vary slightly depending on whether the diameter of the network is odd or even.

Independent spanning trees (ISTs) have important applications in reliable communication protocols, broadcasting, and secure message distribution in many topologies including Gaussian networks \cite{zaid2017}, dense Gaussian Networks \cite{bader2016}, star networks \cite{star}, Parity Cubes \cite{Chang2015}, and Mobius cubes \cite{Yang2015}. However, CISTs offer additional properties compared to ISTs, making them particularly advantageous for various applications. However, finding two or more CISTs in a general graph is an NP-complete problem \cite{hasunuma2002}. Despite this complexity, it is feasible to find CISTs in certain classes of graphs. For instance, \cite{cistLine} presents an algorithm to find the maximum number of CISTs in line graphs. In \cite{cist_bccc}, the authors construct CISTs in BCube Connected Crossbar networks and use them to design fault-tolerant routing protocols. Similarly, CISTs for augmented cube networks are discussed in \cite{cist_augmentedCube}. The application of CISTs for secure routing protocols in Crossed Cube networks is explored in \cite{cist_crossedcubes}. Additionally, \cite{cist_ad_hoc} proposes an ILP formulation to construct CISTs in Ad Hoc networks and used to improve overall network performance.

Table \ref{table:trees_comparison} presents the differences between ISTs, EDNISTs and CISTs based on the work done with their applications on some interconnection network.

\begin{table*}[!htbp]
\centering
\caption{Comparison of ISTs, EDNISTs, and CISTs.}
\label{table:trees_comparison}
\renewcommand{\arraystretch}{1.3}
\resizebox{\textwidth}{!}{
\begin{tabular}{|p{1.5cm}|p{7cm}|p{8cm}|}
\hline
\textbf{Type} & \textbf{Key Properties} & \textbf{Applications} \\
\hline
\textbf{ISTs} 
& Ensure reliability and security; suitable for fault-tolerant routing and broadcasting. 
& Used in Gaussian networks \cite{zaid2017}, dense Gaussian networks \cite{bader2016}, star networks \cite{star}, parity cubes \cite{Chang2015}, and Mobius cubes \cite{Yang2015}. \\
\hline
\textbf{EDNISTs} 
& Provide edge-disjointness in addition to node-disjointness. 
& Applied in dense Gaussian networks \cite{bader2016}. \\
\hline
\textbf{CISTs} 
& More flexible due to multiple root selections; NP-complete to find in general graphs; feasible in specific classes of networks. 
& Algorithms for line graphs \cite{cistLine}, BCube Connected Crossbar networks \cite{cist_bccc}, augmented cube networks \cite{cist_augmentedCube}, crossed cubes \cite{cist_crossedcubes}, and Ad Hoc networks \cite{cist_ad_hoc}; also studied in Gaussian networks \cite{pai2022configuring}. \\
\hline
\end{tabular}
}
\end{table*}

\section{Background}
\label{sec:bacground}
Computer network topologies are studied and analyzed using abstract graphs, a tool from graph theory. In this section, we present the essential graph theory background necessary to mathematically and algorithmically represent a Gaussian network topology and to develop a construction algorithm for CISTs.

A graph $G = (V, E) $ is an ordered pair consisting of a set of nodes $ V $ and a set of edges $ E$. Nodes represent computation units or network devices while each edge $e=(v,u)$ represents a  connection link between an unorder pair of nodes $v,u\in V$. We assume edges in $G$ are undirected and not duplicated. Therefore, $G$ is an undirected and simple graph. 

Two nodes $v,u\in V$ are adjacent or neighbors if $(v, u)\in E$. The degree of a node $v $, denoted as $\deg(v)\geq 0$, represents the total number of neighbors it has. $G$ is $n$-regular if $\deg(v)=n$ for every node in $G$. A path $P$ from $v_1$ to $v_n$ in $G$ is an ordered sequence of nodes $(v_1, v_2,\ldots, v_n)$ such that each consecutive pair $v_i, v_{i+1}$ are adjacent (i.e., $(v_i, v_{i+1})\in E$.) The length of the path $P$ is the number of edges it traverses. The shortest path between $v, u$ is a path from $v$ to $u$ with minimum possible length $L$. $L$ is also denoted as the distance $d(v,u)=L$ between $v, u$.
Two paths $P_1, P_2$ are edge-disjoint if they share no common edges. Moreover, $P_1, P_2$ are node-disjoint if they share no common nodes. A cycle is a path where $v_1 = v_n$. The diameter $D(G)$ equals the maximum length  of all shortest paths in $G$. A graph is connected if there is a path between each pair of its nodes. 
A tree is a connected graph with no cycles. A spanning tree $ST = (V, E')$ of $G$ is a subgraph of $G$ that contains all nodes in $G$ and $E'\subseteq E$ and it is itself a tree. A set of spanning trees $T_1, T_2, \ldots T_n$ on $G$ is  Independent Spanning Trees (ISTs) if:
\begin{enumerate}
    \item all trees are edge-disjoint,
    \item all trees have the same root node $r$
    \item the two paths between $r,v$ in any two trees $T_i, T_j$ is node-disjoint  except for $r, v$.
\end{enumerate} 
Moreover, Hasunuma introduced the concept of Completely Independent Spanning Trees (CISTs) in \cite{hasunuma2002, hasunuma2001}, which extend Independent Spanning Trees (ISTs) by allowing the spanning tree roots to be distinct in all trees. Formally a set of spanning trees $T_1, T_2, \ldots, T_m$ is a CIST on $G$ if:
\begin{enumerate}
    \item all trees are edge-disjoint,
    \item $r_1, \ldots, r_m$ are root nodes for $T_1, T_2, \ldots, T_m$ , respectively, such that $r_i \neq r_j$ for $1 \leq i,j \leq m$,
    \item the two paths between $v,u$ in any two trees $T_i, T_j$ is node-disjoint  except for $v,u$. 
\end{enumerate}

The Gaussian network is designed based on the Gaussian integers \(\mathbb{Z}[i] = \{x + yi \;|\; x, y \in \mathbb{Z}\}\). Here, \(\mathbb{Z}[i]\) represents a subset of the complex numbers \(\mathbb{C}\) where $\mathbb{Z}$ is the set of integers and 
\(i = \sqrt{-1}\). \(\mathbb{Z}[i]\) forms a Euclidean domain with the norm \(N(\mathbb{Z}[i]) = ||a + bi|| = a^2 + b^2\). For any \(\alpha, \beta \in \mathbb{Z}[i]\) with \(\alpha \neq 0\) there exist \(q, r \in \mathbb{Z}[i]\) such that \(\beta = q\alpha + r\). Consequently, \(\beta \mod \alpha = r\) with \(||r|| < ||\alpha||\). This property of the Gaussian integers underpins the structure and design of Gaussian networks.

Given a Gaussian integer \(\alpha = a + bi \neq 0\), a Gaussian network is a graph \( G = (V, E) \) where its nodes \( V \) are the elements of the residue class modulo \(\alpha\) in \(\mathbb{Z}[i]\). The Gaussian integer \(\alpha\) is referred to as the generator of the graph \( G \). Two nodes \( x, y \in V \) are adjacent if \( x - y \mod \alpha \in \{\pm 1, \pm i\} \). Consequently, the Gaussian network \( G \) is a 4-regular and symmetric graph. Therefore each node has exactly four neighbors and the graph is invariant under rotations and reflections. The total number of nodes in \( G \) corresponds to the norm \( ||\alpha|| = a^2 + b^2 \). A Gaussian network is considered dense when it contains the maximum number of nodes for a given diameter \( k = a = b - 1 \) where \(\alpha = a + bi\).

\section{Completely Independent Spanning Trees}
\label{sec:CISTs}
Given a dense Gaussian network $G = (V, E)$ generated by $\alpha = a + bi$ such that $b = a + 1$, where $k = a = b-1$ is the diameter of the network. The node set $V$ can be divided into 10 subsets, as shown in Fig. \ref{figure:GaussianSubsets}, and they are represented as follows (R is the root node).
{\footnotesize\setlength{\mathindent}{0cm}
\begin{align*}
\begin{split}
R^j  &= \{ xi^j + yi^{j+1} \ | \ (x = \frac{k}{2}+1, y = \frac{k}{2}-1, k \ is \ even) \\
    & \ or \ (x = \lceil \frac{k}{2} \rceil, y = \lfloor \frac{k}{2} \rfloor, k \ is \ odd)\}
\end{split}\\
S^j  &= \{ xi^j + yi^{j+1} \ | \ x = k, y = 0, k > 2 \} \\
B1^j &= \{ xi^j + yi^{j+1} \ | \ 0 < x < x_r, y_r < y < k, |x| + |y| = k \} \\
B2^j &= \{ xi^j + yi^{j+1} \ | \ x_r < x < k, 0 < y < y_r, |x| + |y| = k \} \\
B3^j &= \{ xi^j + yi^{j+1} \ | \ -k < x \leq -y_r, -x_r \leq y < 0, |x| + |y| = k \} \\
B4^j &= \{ xi^j + yi^{j+1} \ | \ -y_r < x \leq 0, -k \leq y < -x_r, |x| + |y| = k \} \\
L1^j &= \{ xi^j + yi^{j+1} \ | \ -k \leq x \leq 0, 0 \leq y \leq k, |x| + |y| = k \} \\
L2^j &= \{ xi^j + yi^{j+1} \ | \ 0 < x < k, -k < y < 0, |x| + |y| = k \} \\
D1^j &= \{ xi^j + yi^{j+1} \ | \ -k < x < k, 0 \leq y < k, |x| + |y| < k \} \\
D2^j &= \{ xi^j + yi^{j+1} \ | \ -k+1 < x < k-1, -k < y < 0, |x| + |y| < k \} 
\end{align*}
}

\noindent where $j = 0, 1$ and used to represent, in respective order, the subsets of the first and second CISTs in Gaussian networks. Note that, $x_r$ and $y_r$ are the $x$ and $y$ of $R$, respectively.

\begin{figure}[!h]
\centering
\includegraphics[width=0.45\textwidth]{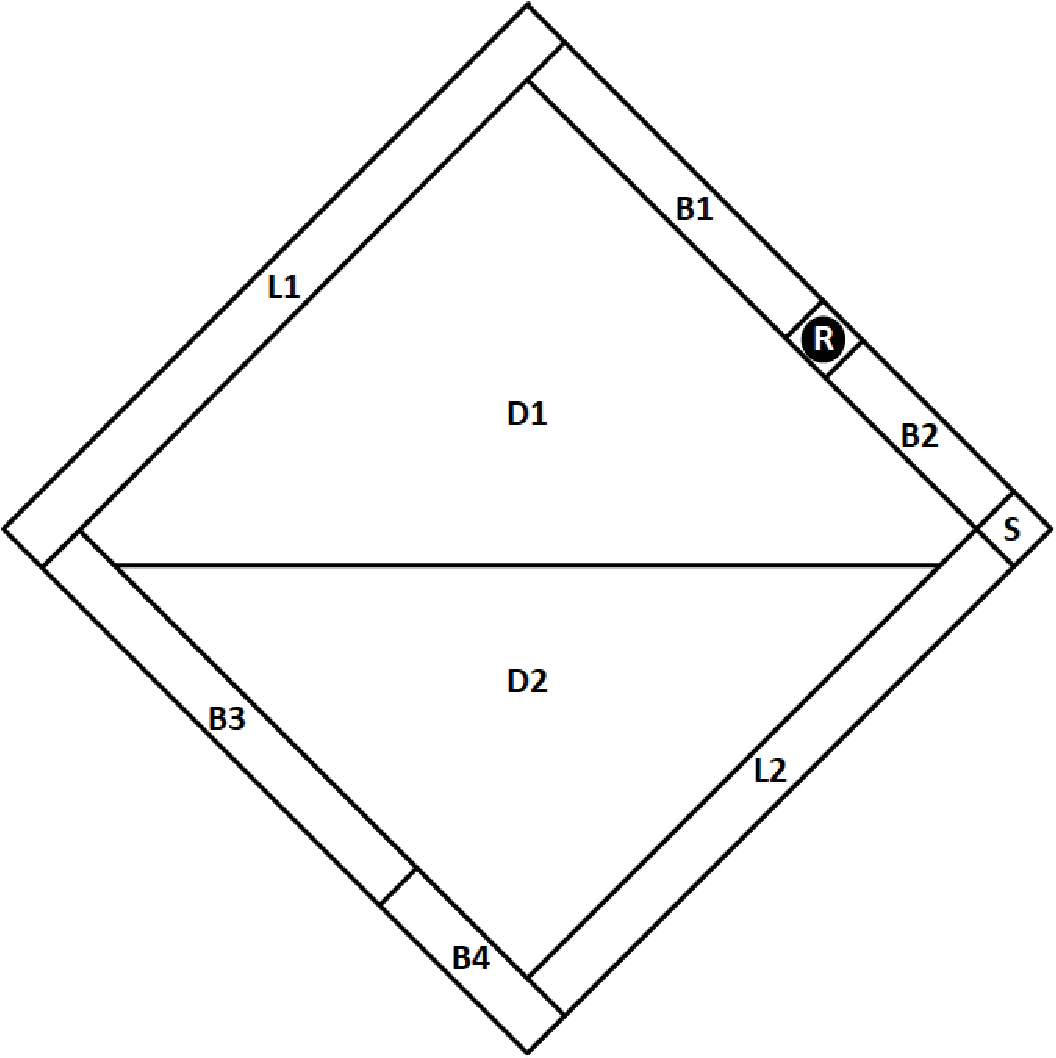}
\caption{Node subsets of Gaussian network.}
\label{figure:GaussianSubsets}
\end{figure}

\begin{lemma}
\label{lemma:subsetsContainAllNodesinG}
The subsets contain all nodes in $G$
\end{lemma}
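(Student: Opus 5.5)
The plan is to use the standard representation of a dense Gaussian network. For $\alpha = k + (k+1)i$, the residue classes modulo $\alpha$ have a canonical set of representatives, namely the Gaussian integers $x+yi$ with $|x|+|y|\le k$. This representation is established in the Gaussian network literature cited earlier \cite{2006Gaussian,2008Gaussian}.

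First I will check that this set is a complete residue system. Counting gives $1+\sum_{t=1}^{k}4t = 2k^2+2k+1 = k^2+(k+1)^2 = \|\alpha\|$, which matches $|V|$. Two distinct points of the diamond cannot be congruent modulo $\alpha$: their difference has $\ell_1$-norm at most $2k$, while every nonzero multiple of $\alpha$ has $\ell_1$-norm at least $2k+1$. Hence $V$ is identified with the diamond $\{x+yi : |x|+|y|\le k\}$. Since the subsets are described with $j=0$ for the first CIST, it suffices to prove covering for $j=0$. For $j=1$ the coordinates $xi+yi^2$ are just the rotation by $i$ of the $j=0$ picture, and rotation by $i$ maps the diamond onto itself.

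Next I split the diamond into interior points with $|x|+|y|<k$ and boundary points with $|x|+|y|=k$.

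\textbf{Interior.} $D1$ covers every interior point with $y\ge 0$, since the constraints $-k<x<k$ and $y<k$ are automatic there. $D2$ covers the points with $y<0$ and $|x|<k-1$. For $y<0$ and $|x|+|y|<k$ we already have $|x|\le k-2$, so $|x|<k-1$ holds automatically. Thus the interior is covered.

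\textbf{Boundary.} I check the $4k$ boundary points quadrant by quadrant.
\begin{itemize}
\item Second quadrant ($x\le 0$, $y\ge 0$): covered by $L1$.
\item Fourth quadrant (open arc $0<x<k$, $y<0$): covered by $L2$.
\item Point $(k,0)$: this is $S$.
\item First-quadrant arc ($0<x<k$, $y>0$): split at the root $(x_r,y_r)$ into $B1$ (for $x<x_r$) and $B2$ (for $x>x_r$), with $R$ itself accounting for $x=x_r$.
\item Third-quadrant arc ($x<0$, $y<0$): split between $B3$ and $B4$ at the breakpoint $x=-y_r$, $y=-x_r$.
\end{itemize}
For the root I must verify that $x_r+y_r=k$ in both parity cases, which holds since $(k/2+1)+(k/2-1)=k$ and $\lceil k/2\rceil+\lfloor k/2\rfloor=k$, so $R$ lies on the arc.

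The step I expect to be the main obstacle is the careful endpoint bookkeeping on the third-quadrant arc and at the arc ends. I need $B3\cup B4$ to give exactly $\{(x,-k-x): -k<x<0\}$, with the inequalities $x\le -y_r$ versus $x>-y_r$ splitting cleanly. I also need the points $(0,\pm k)$ and $(\pm k,0)$ to fall into $L1$, $S$, or $B4$ as appropriate: $(0,-k)$ lies in $B4$ because $x=0$ is allowed there, and $(-k,0)$ and $(0,k)$ lie in $L1$.

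The condition $k>2$ in $S$, together with the even case $y_r=k/2-1$, suggests that small $k$ must be checked by hand. For $k\le 2$ I would enumerate the few nodes directly. Since the statement only requires covering and not disjointness, overlaps are harmless, so only gaps need to be ruled out.
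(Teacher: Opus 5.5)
Your argument is correct, but it takes a different route from the paper.

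The paper argues purely by cardinality. It computes $|R^0|=|S^0|=1$, $|B1^0|+|B2^0|=k-2$, $|B3^0|+|B4^0|=k$, $|L1^0|=k+1$, $|L2^0|=k-1$ and $|D1^0|+|D2^0|=2k^2-2k+1$. It notes that these sum to $2k^2+2k+1=N(\alpha)$ and concludes that every node is covered. That inference holds only if the ten sets are pairwise disjoint and distinct diamond coordinates are distinct nodes. The paper takes both for granted: disjointness is the next lemma, Lemma~\ref{lemma:disjointSubsets}, stated essentially without argument. The paper also treats only $j=0$.

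Your pointwise check proves covering directly and needs no disjointness. You identify $V$ with the diamond $|x|+|y|\le k$, cover the interior by $D1\cup D2$, and walk the $4k$ boundary points arc by arc, with $(k,0)\in S$ and $(0,-k)\in B4$. You then handle $j=1$ through the bijection $z\mapsto iz$.

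What the paper's count adds is the individual set sizes, which it reuses in the later edge counts. Combined with your covering result, those sizes also give Lemma~\ref{lemma:disjointSubsets} rigorously. Sets whose union has $N(\alpha)$ elements and whose sizes add up to $N(\alpha)$ must be pairwise disjoint.

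One step you assert without proof deserves a line: every nonzero multiple $x+yi=(m+ni)\alpha$ has $|x|+|y|\ge 2k+1$. Here $x+y=(2k+1)m-n$ and $y-x=m+(2k+1)n$. Suppose both were at most $2k$ in absolute value. Then $\bigl((2k+1)^2+1\bigr)m=(2k+1)(x+y)+(y-x)$ and $\bigl((2k+1)^2+1\bigr)n=(2k+1)(y-x)-(x+y)$ would have absolute value at most $4k^2+4k<(2k+1)^2+1$, forcing $m=n=0$. Since $|x|+|y|=\max(|x+y|,|x-y|)$, the claim follows.
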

\begin{proof}
Let $G$ be a Gaussian network generated by $\alpha = a + bi$ where $b = a + 1$ and its diameter is $k = a$. When $j=0$ (i.e., no rotation is applied), let $|A|$ represents the number of nodes in set $A$. Based on the subsets mentioned above, then $|R^0| = 1$, $|S^0| = 1$, $|B1^0| + |B2^0| = k-2$, $|B3^0| + |B4^0| = k$, $|L1^0| = k+1$, $|L2^0| = k-1$, $|D1^0| + |D2^0| = 2k^2-2k+1$. By summing up all together, we get a total number of nodes equal to $2k^2 + 2k + 1$, which is equal to $N(\alpha)$. Thus, the subsets contains all nodes in $G$.
\end{proof}

\begin{lemma}
\label{lemma:disjointSubsets}
The subsets are disjoint
\end{lemma}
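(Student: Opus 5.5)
The plan is to prove pairwise disjointness directly from the defining coordinate conditions, working with $j=0$ and then transferring the result to $j=1$ by rotation. Multiplication by $i$ is an automorphism of $G$: it maps $x+yi$ to $-y+xi$ and preserves adjacency modulo $\alpha$. The $j=1$ subsets are exactly the images of the $j=0$ subsets under this map, so disjointness of the $j=0$ family implies disjointness of the $j=1$ family. For $j=0$, each node is represented by its unique coordinates $(x,y)$ in the fundamental region $|x|+|y|\le k$, as used in the counting of Lemma \ref{lemma:subsetsContainAllNodesinG}. Two subsets are therefore disjoint if their defining conditions cannot hold for the same pair $(x,y)$.

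The sets fall into two groups, and I would separate them with the norm $|x|+|y|$.
\begin{itemize}
\item The interior sets $D1^0$ and $D2^0$ require $|x|+|y|<k$.
\item The boundary sets $S^0$, $B1^0$ through $B4^0$, $L1^0$ and $L2^0$ require $|x|+|y|=k$.
\end{itemize}
So no interior set meets a boundary set. The root $R^0$ also satisfies $|x_r|+|y_r|=k$ with $0<x_r<k$ and $0<y_r<k$, so it is a boundary point strictly inside the first quadrant. Within the interior group, $D1^0$ has $y\ge 0$ and $D2^0$ has $y<0$, so they are disjoint.

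On the boundary $|x|+|y|=k$ I would classify points by the quadrant they lie in, since each quadrant is a line segment.
\begin{itemize}
\item $L1^0$ has $x\le 0$ and $y\ge 0$.
\item $L2^0$ has $x>0$ and $y<0$.
\item $B3^0$ and $B4^0$ both have $x\le 0$ and $y<0$.
\item $B1^0$, $B2^0$, $R^0$ and $S^0$ all have $x>0$ and $y\ge 0$. This uses $x_r>0$ together with the strict bound $0<x$ in $B1^0$.
\end{itemize}
These four quadrant regions are pairwise disjoint, which settles every pair from different quadrants.

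It remains to separate sets within the same quadrant. In the first quadrant, the segment is parametrized by $x$ alone, and the $x$-ranges are disjoint:
\begin{itemize}
\item $B1^0$ has $0<x<x_r$;
\item $R^0$ has $x=x_r$;
\item $B2^0$ has $x_r<x<k$;
\item $S^0$ has $x=k$.
\end{itemize}
In the third quadrant, $B3^0$ has $x\le -y_r$ and $B4^0$ has $x>-y_r$, so they are disjoint.

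The main obstacle is bookkeeping at the endpoints, where a point such as $(0,-k)$, $(-k,0)$ or $(k,0)$ lies between two quadrant segments. I must check that the strict and non-strict inequalities assign each such point to exactly one set. For example, $(0,-k)$ has $y<0$ and so is excluded from $L1^0$; whether it falls in $B4^0$ is not needed for disjointness, since $B3^0$ is excluded by its strict bound $-k<x$. I would also remark that $(x,y)$ and $(x+a,y+b)$ are congruent modulo $\alpha$, so two boundary points of the region may represent the same node. This is why I fix the unique representatives implicitly used in Lemma \ref{lemma:subsetsContainAllNodesinG}. With that convention, the sizes computed there sum to exactly $N(\alpha)$, and this serves as a consistency check that no node is counted twice.
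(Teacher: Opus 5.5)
Your proposal is correct, and it gives an actual proof where the paper gives only an assertion. The paper's argument restates that any two distinct subsets $X,Y$ satisfy $X\cap Y=\emptyset$ and never checks the defining inequalities.

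You check every pair by a three-level separation:
\begin{itemize}
\item interior versus boundary, via $|x|+|y|<k$ or $|x|+|y|=k$;
\item the sign classes of $(x,y)$ on the boundary;
\item the $x$-ranges $0<x<x_r$, $x=x_r$, $x_r<x<k$, $x=k$ in the first quadrant, and the split $x\le -y_r$ versus $x>-y_r$ in the third.
\end{itemize}
Your observation that $X^1=iX^0$, since $xi+yi^2=i(x+yi)$, then transfers the result to the second family. This case analysis also shows where $k>2$ is needed, and you should say so explicitly. Namely, $k>2$ gives $0<y_r$ and $x_r<k$, so $R^0$ lies strictly inside the first-quadrant segment and differs from $S^0$; for $k=2$ one gets $y_r=0$ and $x_r=k$. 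The paper's version buys only brevity.

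Two corrections are needed.

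First, your closing remark that two boundary points of the region may represent the same node is false in the dense case. It also contradicts the uniqueness you correctly assume at the start. If $z_1\neq z_2$ both satisfy $|x|+|y|\le k$, then $z_1-z_2$ has $|\mathrm{Re}|+|\mathrm{Im}|\le 2k$. Every nonzero multiple $\alpha\beta$ has $|\mathrm{Re}|+|\mathrm{Im}|\ge 2k+1$:
\begin{itemize}
\item equality holds when $\beta$ is a unit;
\item when $|\beta|^2\ge 2$, the Euclidean length alone is $\sqrt{2(2k^2+2k+1)}>2k+1$.
\end{itemize}
So the diamond is a complete residue system modulo $\alpha$, and there is no convention to fix. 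This one-line fact is exactly what lets disjointness of the coordinate sets imply disjointness of the node sets. State it explicitly rather than leaning on the count in Lemma~\ref{lemma:subsetsContainAllNodesinG}, which itself presupposes it.

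Second, in your endpoint example, $(0,-k)$ is kept out of $B3^0$ by the upper bound $x\le -y_r<0$, not by the strict bound $-k<x$. In any case, your sign-class partition already handles every endpoint, so no separate bookkeeping is needed there.
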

\begin{proof}
For $k>2$, $j = 0,1$, and $V = \{R^j\} \cup \{S^j\} \cup \{L1^j\} \cup \{L2^j\} \cup \{B1^j\} \cup \{B2^j\} \cup \{B3^j\} \cup \{B4^j\} \cup \{D1^j\} \cup \{D2^j\}$ represents all nodes in the network. For any two distinct subsets $X$,$Y$ $\in V$, and $X \neq Y$, we have $X \cap Y = \phi$. Thus, all subsets are disjoint.
\end{proof}

\begin{lemma}
\label{lemma:numOfTrees}
$|CISTs| \leq 2$
\end{lemma}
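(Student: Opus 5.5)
We prove the bound by counting edges. Since $G$ is $4$-regular with $N(\alpha)=2k^2+2k+1$ nodes (the count obtained in Lemma \ref{lemma:subsetsContainAllNodesinG}), it has
\[
|E| = \frac{4N(\alpha)}{2} = 2N(\alpha) = 4k^2+4k+2
\]
edges. Each spanning tree of $G$ has exactly $N(\alpha)-1$ edges. By the first condition in the definition of CISTs, any family $T_1,\ldots,T_m$ of CISTs is pairwise edge-disjoint. Hence
\[
m\,(N(\alpha)-1)\le |E| = 2N(\alpha).
\]
Rearranging gives $m \le 2N(\alpha)/(N(\alpha)-1) = 2 + 2/(N(\alpha)-1)$.

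It remains to show that $m=3$ is impossible. For $k\ge 1$ we have $N(\alpha)\ge 5$, so $2/(N(\alpha)-1)<1$. Since $m$ is an integer, $m\le 2$. If the paper's standing assumption $k>2$ is used instead, then $N(\alpha)\ge 25$ and the slack term is even smaller.

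A complementary degree argument gives the same bound. In any family of $m$ CISTs, each node is an internal (non-leaf) node of at most one tree; this is the standard characterization of CISTs due to Hasunuma. A node that is internal in some tree uses at least $2$ of its $4$ incident edges there, and it needs at least one edge in every other tree. So if $m\ge 3$, a node internal in $T_1$ would need at least $2+(m-1)\ge 4$ edges. Combined with the edge count, this leaves no room.

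The only potential obstacle is confirming that edge-disjointness, which is condition 1 of the CIST definition, is the only property needed. It is, so the counting argument suffices and there is no real difficulty. Finally, the construction in the later sections attains $m=2$, which shows the bound is tight.
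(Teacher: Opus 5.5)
Your edge-counting argument is correct and is the same as the paper's proof. $G$ has $2N(\alpha)=4k^2+4k+2$ edges, each CIST is a spanning tree with $N(\alpha)-1=2k^2+2k$ edges, and edge-disjointness forces $m(N(\alpha)-1)\le 2N(\alpha)$, hence $m\le 2$; you also make explicit the integrality step that the paper leaves implicit. Your ``complementary degree argument'' is not an independent proof and can be dropped: for $m=3$ the count $2+(m-1)=4$ only equals the degree rather than exceeding it, so it yields a contradiction only by falling back on the edge count.
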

\begin{proof}
Let $G$ be the Gaussian network.
$N(\alpha) = 2k^2 + 2k + 1$, and total edges is $4k^2 + 4k + 2$.
For $k>2$, we have exactly $2k^2 + 2k$ edges in each CIST.
Thus, we have at most 2 CISTs in $G$.
\end{proof}

\begin{lemma}
\label{lemma:CIST1isConnected}
$CIST_1$ is connected
\end{lemma}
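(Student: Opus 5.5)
Since the edge set of $CIST_1$ has not been written out yet, we take $CIST_1$ to be the tree ($j=0$) built on the partition of Lemma~\ref{lemma:subsetsContainAllNodesinG} and Lemma~\ref{lemma:disjointSubsets}, rooted at $R^0$. The plan is to show that every node of $G$ has a path to $R^0$ inside $CIST_1$. Because the subsets cover $V$ and are pairwise disjoint, it is enough to do this one subset at a time, in an order in which each subset attaches to subsets already shown to be connected to $R^0$. Combined with the edge count $2k^2+2k=N(\alpha)-1$ from Lemma~\ref{lemma:numOfTrees}, connectivity then also gives acyclicity, so $CIST_1$ is a spanning tree.

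\textbf{Step 1 (the inner region).} First we treat the $y=0$ row of $D1^0$. It consists of the nodes $x$ with $-k<x<k$, together with the point $S^0=k$. The plan is to show that its nodes form a path joined by the horizontal edges $\pm 1$, and that $R^0$ reaches this path through a vertical chain of $\pm i$ edges. That chain runs down from $x_r+y_ri$ to $x_r$, and its intermediate nodes lie in $D1^0$ because $|x|+|y|<k$ there. Every remaining node $x+yi$ of $D1^0$ and $D2^0$ is then joined to this row by a vertical chain of $\pm i$ edges. The whole chain stays inside $D1^0\cup D2^0$, because $|x|+|y|$ decreases as $y$ moves toward $0$. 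This makes $D1^0\cup D2^0\cup\{R^0\}$ connected.

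\textbf{Step 2 (the boundary subsets).} Each boundary node with $|x|+|y|=k$, in $L1^0$, $L2^0$ or $B1^0$ through $B4^0$, has a neighbour $x+yi\mp 1$ or $x+yi\mp i$ with $|x|+|y|=k-1$. That neighbour lies in $D1^0$ or $D2^0$, or it is $R^0$ itself, since $x_r+y_r=k$ for odd $k$ and likewise for even $k$. So each boundary node is attached by a single edge. The key check is that the chosen attaching edges use only the allowed directions. Special care is needed at the corner nodes, such as $-k$ and $ki$ in $L1^0$, and at the wrap-around nodes. For these we use $\alpha=k+(k+1)i$ and reduce modulo $\alpha$ to identify the correct neighbour; for example, the neighbour of $ki$ in direction $+i$ is $(k+1)i\equiv -k$.

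\textbf{Main obstacle.} We expect the main obstacle to be the wrap-around and corner cases. There we must verify that the neighbour we pick really is adjacent modulo $\alpha$ and lies in an already connected subset. We also have to confirm that the edges used in Steps 1 and 2 are exactly the $N(\alpha)-1$ edges assigned to $CIST_1$, with nothing missing. We would settle this by checking the four boundary segments of the diamond $|x|+|y|=k$ directly, treating $k$ odd and $k$ even separately, because $R^0$ is defined differently in the two cases.
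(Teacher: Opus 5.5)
Your plan proves connectivity of a spanning tree you designed yourself, not of the paper's $CIST_1$. The check you postpone to the end, that your edges are exactly the ones assigned to $CIST_1$, fails.

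The paper fixes $CIST_1$ by per-subset edge lists:
$R^0$ uses $\pm 1$ and $+i$; $S^0$ uses $-1$; $B1^0$ uses $-1,+i$; $B2^0$ and $B3^0$ use $\pm 1$; $B4^0$ uses $+1,-i$ (only $-i$ at $-ki$). The only edge listed for a node of $D1^0$ is $(x,x-1)$, and for a node of $D2^0$ it is $(x,x+1)$. $L1^0$ and $L2^0$ contribute nothing.

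Hence every $\pm i$ edge of $CIST_1$ is a wrap-around edge. The vertical chains in your Step 1, inside $D1^0\cup D2^0$ and from $R$ down to $x_r$, are not in the tree at all. They are in fact edges of the rotated tree $CIST_2$, where $D1^1$ uses $(x,x-i)$ and $D2^1$ uses $(x,x+i)$; for example, $\{0,i\}$ would lie in both trees.

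Step 2 is misdirected as well. $R^0$ and $B1^0$--$B4^0$ are not leaves hung on an inward edge. They are the internal nodes whose wrap-around edges carry all the connectivity between rows. The corner $ki$ is reached only through the $-1$ wrap-around edge of $-(k-1)-i\in B3^0$; the edge from $ki$ down to $(k-1)i$ is not in $CIST_1$.

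The argument that is actually needed goes row by row. Each row $\{x+yi : |x|\le k-|y|\}$ is a horizontal path in $CIST_1$:
for $y\ge 0$ the $D1^0$ nodes chain leftward from the right end (in $R^0$, $B1^0$, $B2^0$ or $S^0$) to a leaf in $L1^0$; for $y<0$ the $D2^0$ nodes chain rightward from the left end (in $B3^0\cup B4^0$) to a leaf in $L2^0$.

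Rows are joined only by wrap-around edges, computed modulo $\alpha=k+(k+1)i$:
\begin{itemize}
\item $R+i\equiv -y_r-x_ri\in B3^0$ and $R+1\equiv -(y_r-1)-(k+1-y_r)i\in B4^0$.
\item A node $x+yi\in B1^0$ goes by $+i$ to $-y-xi\in B3^0$.
\item A node $-a-bi\in B3^0$ goes by $-1$ to $(b-1)+(a+1)i$. This is in $B1^0$ for $b\ge 2$ and equals $ki$ for $b=1$.
\item A node $x+yi\in B2^0$ goes by $+1$ to $-(y-1)-(k+1-y)i\in B4^0$.
\item A node $-a-bi\in B4^0$ goes by $-i$ to $b+ai$. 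This is in $B2^0$ for $a\ge 1$ and equals $S^0=k$ for $a=0$.
\end{itemize}

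Two chains leave the row of $R$. One alternates the $B3^0$ rows $-x_r,\dots,-1$ with the $B1^0$ rows $y_r+1,\dots,k-1$ and ends at $ki$. The other alternates the $B4^0$ rows $-(x_r+1),\dots,-k$ with the $B2^0$ rows $y_r-1,\dots,1$ and ends at $S^0$ in row $0$. Together they reach all $2k+1$ rows using exactly $2k$ linking edges. That gives connectivity, and with the count $2k^2+2k$ it also shows $CIST_1$ is a tree.

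Your path-to-root strategy is the right kind of argument. It is more than the paper's own proof supplies: that proof only matches $|E(CIST_1)|=2k^2+2k=N(\alpha)-1$, which by itself does not rule out a cycle together with a disconnected piece. But the argument has to be run on these edges, not on vertical interior chains.
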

\begin{proof}
Based on the subsets mentioned above, when $j=0$, the number of out going edges in each set in $CIST_1$ is 3 in $R^0$, 1 in $S^0$, $2(k-2)$ in $B1^0 \cup B2^0$, $2k - 1$ in $B3^0 \cup B4^0$, 0 in $L1^0 \cup L2^0$, $2k^2-2k+1$ in $D1^0 \cup D2^0$. The following edge sets for the subsets form $CIST_1$ in Gaussian networks with the same condition of subsets:
{\small\setlength{\mathindent}{0cm}
\begin{align*}
E(R^0)  &= \{ (x,x+1),(x,x-1),(x,x+i) \ | \ same \ as \ R^0 \} \\
E(S^0)  &= \{ (x,x-1) \ | \ same \ as \ S^0 \} \\
E(B1^0) &= \{ (x,x-1),(x,x+i) \ | \ same \ as \ B1^0 \} \\
E(B2^0) &= \{ (x,x-1),(x,x+1) \ | \ same \ as \ B2^0 \} \\
E(B3^0) &= \{ (x,x-1),(x,x+1) \ | \ same \ as \ B3^0 \} \\
E(B4^0) &= \{ (x,x+1),(x,x-i) \ | \ same \ as \ B4^0 \} \\
        & \quad \quad (except \ node \ -ki \ has \ only \ (x,x-i))\\
E(L1^0) &= \phi \\
E(L2^0) &= \phi \\
E(D1^0) &= \{ (x,x-1) \ | \ same \ as \ D1^0 \} \\
E(D2^0) &= \{ (x,x+1) \ | \ same \ as \ D2^0 \} 
\end{align*}
}
Thus, by summing up the numbers we get the total number of edges used in $CIST_1$ is $2k^2+2k$. i.e., 
$|E(CIST_1)| = |E(R^0) \cup E(S^0) \cup E(B1^0) \cup E(B2^0) \cup E(B2^0) \cup E(B3^0) \cup E(B4^0) \cup E(L1^0) \cup E(L2^0) \cup E(D1^0) \cup E(D2^0)|$. By matching the total number of used edges to form the $CIST_1$ with Lemma \ref{lemma:numOfTrees}, we conclude that the $CIST_1$ is connected.
\end{proof}

\begin{lemma}
\label{lemma:CIST2}
$CIST_2$ is obtained from rotating $CIST_1$
\end{lemma}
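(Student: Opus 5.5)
The plan is to define $CIST_2$ as the image of $CIST_1$ under multiplication by $i$, i.e.\ the map $\rho(v)=iv \bmod \alpha$. We then check three things: $\rho$ is a graph automorphism of $G$; the image tree is edge-disjoint from $CIST_1$; and the two trees satisfy the path-disjointness condition of a CIST.

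\textbf{Step 1: $\rho$ is an automorphism.} Since $\alpha\mathbb{Z}[i]$ is an ideal, $\rho$ is well defined on residues modulo $\alpha$. Multiplication by $i$ is invertible, so $\rho$ is a bijection. It maps the generator set $\{\pm1,\pm i\}$ onto itself, so $(v,v+u)\in E$ implies $(iv,iv+iu)\in E$. The sets with superscript $1$ are exactly the images of those with superscript $0$, because $xi^{1}+yi^{2}=i(xi^0+yi^1)$. Applying $\rho$ to the edge lists $E(\cdot^0)$ therefore gives edge lists $E(\cdot^1)$ in which each direction is rotated: $\pm1\mapsto\pm i$ and $\pm i\mapsto\mp1$. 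By Lemma~\ref{lemma:CIST1isConnected}, $CIST_1$ is a connected spanning subgraph with $2k^2+2k$ edges, so it is a spanning tree. Its image $CIST_2$ is then also a spanning tree with root $R^1=iR^0\neq R^0$.

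\textbf{Step 2: edge-disjointness.} By Lemma~\ref{lemma:numOfTrees}, the two trees together use $4k^2+4k$ of the $4k^2+4k+2$ edges. So it suffices to show that no edge of $CIST_1$ equals an edge of $CIST_2$. I would do this by case analysis on the subsets. In $D1^0\cup D2^0$, $CIST_1$ uses only horizontal edges ($\pm1$), while $CIST_2$ uses only vertical edges on the rotated interior $D1^1\cup D2^1$. The rotated $D$-region is the interior rotated by $90^\circ$, so it essentially coincides with $D1^0\cup D2^0$. Hence in the interior the two trees use different edge directions. The remaining edges lie on the boundary $|x|+|y|=k$, where the wrap-around modulo $\alpha$ identifies boundary points. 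Here I would tabulate, for each boundary node, its $CIST_1$ edges and the rotated edges landing on it. I would then check that they never coincide, using that the $L1,L2$ sides carry no $CIST_1$ edges.

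\textbf{Step 3: independence.} I would use Hasunuma's characterization: $T_1,T_2$ are completely independent if and only if they are edge-disjoint and no vertex is an inner (non-leaf) vertex in both. In $CIST_1$, the interior nodes contribute one edge each, which makes them inner vertices. So the key point is to show that in $CIST_2$ the interior nodes (under the vertical orientation) and the boundary nodes are arranged so that each vertex has degree $1$ in at least one tree. Here the $L1,L2$ boundary leaves of $CIST_1$ and the interior chains must be checked against the rotated structure.

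\textbf{Main obstacle.} The hard part is Steps 2 and 3 on the boundary. The modular wrap-around of a dense Gaussian network (e.g.\ $ki \equiv$ a point on the opposite side) makes edge identifications nonobvious, so both the edge-collision check and the inner-vertex check require explicit case analysis. In particular, the special node $-ki$ and the parity-dependent root $R$ need separate handling. I would first verify small cases ($k=3,4$) to confirm the rotation is $90^\circ$ rather than another angle, then do the general boundary table.
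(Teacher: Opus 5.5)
Your Step~1 is essentially the whole of the paper's proof of this lemma, done more rigorously. The paper simply sets $j=1$ in the node sets and directions and lists the resulting $E(\cdot^1)$. These are exactly your $\rho$-images: for example $\{+1,-1,+i\}\mapsto\{+i,-i,-1\}$ at the root, and the exceptional node $-ki$ goes to $k$. The paper never argues edge-disjointness or independence here; it only asserts them in Theorem~\ref{theorem:CIST2}. Your Step~2 can be completed, although its counting sentence is vacuous, since edge-disjointness just means there is no common edge. The only vertical edges of $CIST_1$ are the $k$ wrap-around edges joining $x+(k-x)i$ to $-(k-x)-xi$, $1\le x\le k$. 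Their images under $\rho$ and $\rho^{-1}$ are horizontal edges joining $L1^0$ to $L2^0\cup\{k,-ki\}$. None of these lies in $CIST_1$: nodes of $L1^0\cup L2^0$ contribute no edges, $k$ contributes only its $-1$ edge, and $-ki$ only its $-i$ edge.

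The genuine gap is Step~3: the property you single out as the key point is false, so no boundary case analysis can supply it. As you observed, the interior $D1^0\cup D2^0=\{x+yi:|x|+|y|<k\}$ is $\rho$-invariant. Every interior node has degree exactly $2$ in $CIST_1$ (its own $\mp1$ edge plus the edge from its parent), hence also degree $2$ in $CIST_2=\rho(CIST_1)$. So Hasunuma's criterion fails at every interior node.

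Concretely, removing $0$ from $CIST_1$ cuts off exactly the chain $\{-1,\dots,-k\}$, with $-k\in L1^0$ a leaf. Removing $0$ from $CIST_2$ cuts off $\{-i,\dots,-ki\}$. Hence both tree paths between $-1$ and $-i$ pass through $0$. For $k=4$ they are
$-1,0,1,2,3,4,-4i,3+i,-1-3i,2+2i,-2-2i,1+3i,-3-i,-2-i,-1-i,-i$ and
$-1,-1+i,-1+2i,-1+3i,4,4i,3i,2i,i,0,-i$.

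Nor could a different partner tree help. Suppose $T_1,T_2$ are completely independent in a $4$-regular graph on $n$ vertices. The internal vertices of $T_2$ are leaves of $T_1$ and have $T_2$-degree at most $3$, so $2(n-1)\le n+2|I_2|$, and $T_1$ must have at least $(n-2)/2$ leaves. But $CIST_1$ has only the $2k<(2k^2+2k-1)/2$ leaves $L1^0\cup L2^0$.

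So the rotation does produce a second spanning tree, edge-disjoint from the first and with a different root. The complete independence you set out to prove in Step~3 does not hold, and the same argument refutes that part of Theorem~\ref{theorem:CIST2} for this construction.
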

\begin{proof}
Let $j \ mod \ 4$ represents the number of rotations. Since $i=\sqrt{-1}$, then $i^j = 1, i, -1, -i$ for $j=0, 1, 2, 3$, respectively. Let $(x,x+i)$ be an edge, then $(x,x+i)^j = (x,x+(i)^j)$ represents the $j^{th}$ rotation of an edge $(x,x+i)$. i.e., changing the outgoing port of a node $x$ and using different edge to connect to other neighbor of $x$. To rotate the subsets (node set) once we set $j = 1$. In addition, based on Lemma \ref{lemma:CIST1isConnected}, by rotating the edge sets once ($j = 1$) we obtain other edge sets to form $CIST_2$ as follows:
{\small\setlength{\mathindent}{0cm}
\begin{align*}
E(R^1)  &= \{ ((x,x+i),(x,x-i),(x,x-1) \ | \ same \ as \ R^1 \} \\
E(S^1)  &= \{ (x,x-i) \ | \ same \ as \ S^1 \} \\
E(B1^1) &= \{ (x,x-i),(x,x-1) \ | \ same \ as \ B1^1 \} \\
E(B2^1) &= \{ (x,x-i),(x,x+i) \ | \ same \ as \ B2^1 \} \\
E(B3^1) &= \{ (x,x-i),(x,x+i) \ | \ same \ as \ B3^1 \} \\
E(B4^1) &= \{ (x,x+i),(x,x+1) \ | \ same \ as \ B4^1 \} \\
        & \quad \quad (except \ node \ +k \ has \ only \ (x,x+1))\\
E(L1^1) &= \phi \\
E(L2^1) &= \phi \\
E(D1^1) &= \{ (x,x-i) \ | \ same \ as \ D1^1 \} \\
E(D2^1) &= \{ (x,x+i) \ | \ same \ as \ D2^1 \} 
\end{align*}
}
Note that $SET^j$ means we change the conditions accordingly based on the number of rotations.
At the end, after rotating the node sets and the edge sets once we obtain the $CIST_2$.
\end{proof}

\begin{theorem}
\label{theorem:CIST2}
The Gaussian network $G$ generated by $\alpha = a + bi$ where $b = a + 1$ and $k = a > 2$ contains two Completely Independent Spanning Trees.
\end{theorem}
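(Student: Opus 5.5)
To prove the theorem I will exhibit $CIST_1$ and $CIST_2$ as in Lemma \ref{lemma:CIST1isConnected} and Lemma \ref{lemma:CIST2}, and then verify the characterization of CISTs due to Hasunuma. That characterization says $T_1,T_2$ are completely independent iff (a) they are edge-disjoint, and (b) every vertex $v$ is an internal (non-leaf) node in at most one of the two trees. Using it avoids checking node-disjointness of all $u$--$v$ paths directly. It also removes the root condition, since any two distinct roots $r_1=R^0$, $r_2=R^1$ will do. So the plan has three steps:
\begin{enumerate}
\item show each of $CIST_1$, $CIST_2$ is a spanning tree;
\item show $E(CIST_1)\cap E(CIST_2)=\emptyset$;
\item show the internal-vertex sets are disjoint.
\end{enumerate}

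\textbf{Step 1 (spanning trees).} By Lemmas \ref{lemma:subsetsContainAllNodesinG} and \ref{lemma:disjointSubsets}, the sets partition $V$, so each node's prescribed outgoing edges are well defined. By the count in Lemma \ref{lemma:CIST1isConnected}, $CIST_1$ has exactly $2k^2+2k = N(\alpha)-1$ edges. Counting alone does not give connectivity, so I will strengthen that lemma: I will show that from every node, following the prescribed edges reaches $R^0$ without cycling. I would do this by tracing rows. A node of $D1^0$ follows $x\mapsto x-1$ along its row $y$ until it reaches the left boundary, which lies in $L1^0$ or $B3^0$. A node of $D2^0$ follows $x\mapsto x+1$ to the boundary $L2^0$ or $B2^0$. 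The boundary sets $B1^0,\dots,B4^0$ then carry these paths around the "rim" toward $R^0$ and $S^0$, with wrap-around edges given by reduction modulo $\alpha$. Hence the graph is connected with $N(\alpha)-1$ edges, so it is a tree. $CIST_2$ is its image under multiplication by $i$, which is an automorphism of $G$ because $i\alpha$ generates the same network. So Lemma \ref{lemma:CIST2} transfers the tree property to $CIST_2$ immediately.

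\textbf{Step 2 (edge-disjointness).} Tree $CIST_1$ uses mostly horizontal edges ($\pm1$) and $CIST_2$ mostly vertical edges ($\pm i$). The only vertical edges in $CIST_1$ come from $R^0$, $B1^0$ and $B4^0$, and the only horizontal ones in $CIST_2$ come from $R^1$, $B1^1$ and $B4^1$. So I will check explicitly that each of these few boundary edges, about $O(k)$ of them, does not appear in the other tree. I will identify each edge as an unordered pair taken mod $\alpha$. Since $|E(CIST_1)|+|E(CIST_2)| = 4k^2+4k = |E|-2$, exactly two edges of $G$ should remain unused; locating them is a useful sanity check.

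\textbf{Step 3 (internal vertices).} The inner vertices of $CIST_1$ are exactly the nodes that some other node points to. These are the nodes of the $D$ sets and the rim nodes adjacent to them, while the tips $L1^0\cup L2^0$ are leaves. I will show that each inner vertex of $CIST_1$ lies in $L1^1\cup L2^1$ or is otherwise a leaf of $CIST_2$, using the fact that rotation sends rows to columns. The parity split in the definition of $R^j$ (even versus odd $k$) will force a case analysis near the root.

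\textbf{Main obstacle.} Steps 1 and 3 on the rim are the crux. Under reduction mod $\alpha$, the sets $B1$--$B4$ and $L1,L2$ glue to each other in a twisted way, so the orbit of a node and the leaf/inner status must be tracked modulo $\alpha$. I would first verify the case $k=3,4$ by hand, and then argue by induction on the distance from $R$ along the rim.
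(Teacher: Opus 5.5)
Your Step 3 is not just the hard part: it is false for the trees of Lemmas~\ref{lemma:CIST1isConnected} and~\ref{lemma:CIST2}, so the plan cannot be completed.

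Node $0$ lies in both $D1^0$ and $D1^1$. So $CIST_1$ contains the edges $\{1,0\}$ and $\{0,-1\}$, while $CIST_2=i\cdot CIST_1$ contains $\{i,0\}$ and $\{0,-i\}$. Thus $0$ is an inner vertex of both trees, which is exactly what Hasunuma's criterion forbids. Concretely, the only $CIST_1$-edges meeting $\{-1,\dots,-k\}$ are the row edges $\{-m,-m-1\}$ for $0\le m\le k-1$. To see this, check that the remaining neighbours contribute no edges to these nodes: these are $-m\pm i$ for $1\le m\le k-1$, and $-k-1\equiv -ki$, $-k+i\equiv 1-(k-1)i$, $-k-i\equiv ki$. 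So this set is a pendant path hanging from $0$. By rotation, $\{-i,\dots,-ki\}$ is a pendant path hanging from $0$ in $CIST_2$. Hence in both trees the path between $-1$ and $-i$ passes through $0$. For $\alpha=3+4i$ these paths are $-1,0,1,2,3,-3i,2+i,-1-2i,1+2i,-2-i,-1-i,-i$ and $-1,-1+i,-1+2i,3,3i,2i,i,0,-i$. Your heuristic that ``rotation sends rows to columns'' works against you: a node interior to a row path of $CIST_1$ is interior to a column path of $CIST_2$.

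Nor is this a rim defect that a finer case analysis or an induction along the rim could repair. By Table~\ref{table:CISTParentChildParallelConstruction}, every node outside $L1^j\cup L2^j$ has at least one child, and every non-root node has a parent. So the leaves of each tree lie in $L1^j\cup L2^j$, which has only $2k$ nodes, and each tree has at least $N(\alpha)-2k$ inner vertices. Since $2(N(\alpha)-2k)>N(\alpha)$, the two inner sets must intersect.

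More generally, in a $4$-regular graph an inner vertex of $T_1$ must be a leaf of $T_2$, so it has $T_1$-degree at most $3$. Counting degrees then shows that each of two CISTs needs at least $(N(\alpha)-2)/2$ leaves, roughly half the nodes. Trees assembled from long horizontal (resp.\ vertical) paths can never satisfy this.

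Your Steps 1 and 2 look sound; they check out by hand for $k=3$. But they yield only two edge-disjoint spanning trees, and that is also all the paper's own proof addresses: it asserts edge-disjointness and distinct roots and never verifies the path condition. To prove the statement you need a genuinely different pair of trees, e.g.\ along the lines of Pai et al.~\cite{pai2022configuring}, whose inner-vertex sets are disjoint and each of size about $N(\alpha)/2$.
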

\begin{proof}
The Gaussian network is partitioned into subsets where these subsets contain all nodes in the network as illustrated in Lemma \ref{lemma:subsetsContainAllNodesinG}. Lemma \ref{lemma:disjointSubsets} proves that no node in a subset is contained in other subsets. Based on the mentioned subsets, Lemma \ref{lemma:numOfTrees} shows that the number of CISTs in Gaussian networks is at most 2. Accordingly, Lemma \ref{lemma:CIST1isConnected} illustrates that the first CIST in Gaussian network is connected and when it is rotated once we obtain the second CIST as explained in Lemma \ref{lemma:CIST2}. Note that, $E(CIST_1) = E(R^0) \cup E(S^0) \cup E(B1^0) \cup E(B2^0) \cup E(B2^0) \cup E(B3^0) \cup E(B4^0) \cup E(L1^0) \cup E(L2^0) \cup E(D1^0) \cup E(D2^0)$ and $E(CIST_2) = E(R^1) \cup E(S^1) \cup E(B1^1) \cup E(B2^1) \cup E(B2^1) \cup E(B3^1) \cup E(B4^1) \cup E(L1^1) \cup E(L2^1) \cup E(D1^1) \cup E(D2^1)$, where $j=1$. Additionally, $CIST_1$ and $CIST_2$ are edge disjoint since $E(CIST_1) \cap E(CIST_2) = \phi$. At the end, since $R^0 \cap R^1 = \phi$, i.e. roots of $CIST_1$ and $CIST_2$ are distinct, we conclude that $CIST_1$ and $CIST_2$ are Completely Independent Spanning Trees in Gaussian networks generated by $\alpha = a+bi, b=a+1, a>2$.
\end{proof}

\begin{lemma}
\label{lemma:CISTdepth}
The depth of both CISTs is $3k-1$.
\end{lemma}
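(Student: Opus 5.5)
By Lemma \ref{lemma:CIST2}, $CIST_2$ is obtained from $CIST_1$ by multiplying every node and every edge direction by $i$. Multiplication by $i$ is an automorphism of $G$, so it preserves distances inside the trees. It therefore suffices to show that $CIST_1$, rooted at the node $R^0$, has depth exactly $3k-1$. The plan is to trace the unique tree path from $R^0$ to every node using the edge sets listed in the proof of Lemma \ref{lemma:CIST1isConnected}. The depth is then the maximum of these path lengths. We will prove the upper bound $\leq 3k-1$ for every node, and exhibit one node attaining $3k-1$.

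\textbf{Step 1: the boundary backbone.} We follow the edges out of $R^0$. It has edges in directions $\pm1$ and $+i$, and the nodes of $B1^0$, $B2^0$, $S^0$, $B3^0$, $B4^0$ form chains along the boundary $|x|+|y|=k$ (using horizontal $\pm1$ edges and a vertical step). We compute the distance along the tree from $R^0$ to each boundary node. Moving within $B1^0\cup B2^0$ from $R^0=(x_r,y_r)$ costs at most about $k/2$ steps in each direction, with $S^0=k$ at the end of $B2^0$. The chain $B3^0\cup B4^0$ is reached through the wrap-around adjacencies modulo $\alpha$. For the dense Gaussian network, $k+1\equiv -ki$, $ki+i \equiv \ldots$, and similar identities let us identify which boundary node is adjacent to which across the cut. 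For example, the exceptional node $-ki$ uses only $(x,x-i)$, and $-ki-i\equiv k$-type identities connect it back to the top chain. The node $-ki$ is exactly where the $B4^0$ chain hangs off. We write down an explicit formula $d(v)$ for each boundary node in $B1,B2,B3,B4,S$, and check that it is at most roughly $2k$.

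\textbf{Step 2: interior rows and the $L$ sets.} Every node of $D1^0$ has the single edge $(x,x-1)$, and every node of $D2^0$ has $(x,x+1)$. So each interior row is a horizontal chain: in the upper half it runs leftward toward, or from, a boundary node, and in the lower half rightward. We identify the anchor of each row. Since the row nodes point away from the anchor via $\pm1$, the anchor is a boundary node of $B2^0\cup S^0$ or $B1^0$ on the right for $D1$ rows, and a node of $B3^0\cup B4^0$ on the left for $D2$ rows, reached through the boundary edges (the $+1$ edges of $B3,B4$ and the $-1$ edges of $B1,B2$). The depth of a row node is $d(\text{anchor})$ plus its horizontal offset. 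The leaves in $L1^0$ and $L2^0$ have no outgoing edges of their own. Each is attached as the end of some row, e.g.\ the leftmost node of an upper row wraps to an $L1$ node, so its depth is the row-end depth plus one.

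\textbf{Step 3: maximize and exhibit the extremal node.} We sum the anchor depth and the row length and maximize over rows. We expect the maximum on a row whose anchor is deep on the opposite side, around depth $k$ to $2k$. That row has length near $k$, and its end is an $L$ leaf, giving $3k-1$. We verify the extremal case separately for $k$ even and odd, because $R^0$ is defined differently in the two cases. We then point to the explicit node (likely in $L1^0$ near $-k$ or $ki$) at distance exactly $3k-1$.

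\textbf{Main obstacle.} The hardest step is Step 1 together with the wrap-around bookkeeping modulo $\alpha=k+(k+1)i$. We need to determine precisely which boundary nodes are neighbors across the cut, and how the $B3,B4$ chain connects to the root's component. The plan is to use the reduction $x+yi \bmod \alpha$ explicitly, via the relations $k+1+0i\equiv -ki$ and $(k+1)i\equiv -k$ and their shifts, to list the cross-boundary adjacencies. With these in hand, computing the depth function is mechanical, and the parity split of $R^0$ only changes the constants.
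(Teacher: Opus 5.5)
Your reduction to $CIST_1$ via the automorphism $z\mapsto iz$ is fine. Your scheme is also the right way to make the paper's argument rigorous: the depth of a node equals the depth of the boundary anchor of its horizontal row plus the offset along the row, and you then maximize. The paper itself only exhibits the extremal path. For odd $k$ this is $k-1$ wraparound edges from $R$ to $S=k$ followed by the $2k$ horizontal edges to $-k$; for even $k$ it is a path ending at $(k-1)-i$. The gap is that your proposal stops exactly where the content of the proof lies, and the structure you anticipate in Steps 1 and 3 is wrong.

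\emph{The identity.} Your wraparound identity has a sign error. From $-i\alpha=(k+1)-ki$ one gets $k+1\equiv ki$, not $-ki$; it is $k+i$ that is $\equiv -ki$. Shifting your version would interchange the targets of the $+1$ and $+i$ wraparound edges.

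\emph{The spine.} No edge of $CIST_1$ joins two first-quadrant boundary nodes, so there is no chain ``within $B1^0\cup B2^0$''. For $z=x+(k-x)i$ with $0\le x<k$ one has $z+1\equiv(x+1-k)-(x+1)i$ and $z+i\equiv(x-k)-xi$, both on the third-quadrant boundary. For $w=x-(k+x)i$ with $-k<x\le 0$ one has $w-1\equiv(k+x-1)+(1-x)i$ and $w-i\equiv(k+x)-xi$. The spine is therefore two zigzags: $R\to B4^0\to B2^0\to B4^0\to\cdots\to -ki\to S$ of length $2y_r$, and $R\to B3^0\to B1^0\to\cdots\to(1-k)-i\to ki$ of length $2x_r$. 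Each unit of height along the top boundary costs two edges. Hence the anchor $(k-c)+ci$ of the upper row $y=c$ has depth $2|c-y_r|$; in particular $S$ has depth $k-1$ for odd $k$, not about $k/2$. The anchor $-(k-c)-ci$ of the lower row $y=-c$ has depth $2(x_r-c)+1$ for $c\le x_r$ and $2(c-x_r)-1$ for $c>x_r$.

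\emph{The rows and the maximum.} The row $y=\pm c$ has $2(k-c)$ edges, so the long rows are those next to the real axis, whereas no anchor is deeper than $2x_r-1\le k+1$. Adding up:
the end of the upper row $y=c$ has depth $2k+2y_r-4c$ for $c\le y_r$ and $2x_r$ otherwise;
the end of the lower row $y=-c$ has depth $2k+2x_r+1-4c$ for $c\le x_r$ and $2y_r-1$ otherwise;
and $ki$ has depth $2x_r$.
Hence the depth is $2k+\max(2y_r,\,2x_r-3)$. For odd $k$ ($x_r=y_r+1$) this is $3k-1$, attained only at $-k$. For even $k$ ($x_r=y_r+2$) it is again $3k-1$, but attained at the $L2^0$ node $(k-1)-i$, while $-k$ only reaches $3k-2$.

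So the maximum does not arise, as you expect in Step 3, from an anchor at depth between $k$ and $2k$ followed by a row of length about $k$ ending in $L1^0$. It arises from an anchor at depth about $k$ followed by a row of length about $2k$, and for even $k$ it ends in $L2^0$. These explicit formulas are the missing piece; once you have them, both the bound for every node and the attaining node follow immediately.
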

\begin{proof}
The $CIST_1$ is used in this proof to prove its depth, and the same argument could be applied to prove the depth of the $CIST_2$.

Notice that the $CIST_1$ is using all the horizontal edges (i.e., $\pm 1$), excluding the wraparound edges. Based on the network subsets explained in the beginning of this section, when $k$ is odd, it is observable that the longest path between two nodes in $CIST_1$ contains (horizontally) the nodes from $-k$ to $k$ with length $2k$. Note that, these connected nodes belongs to $L1 \cup D1 \cup S$. The total wraparound edges in $CIST_1$ is $2k$. Thus, the maximum depth is $4k$ when the root node is located at node $ki$, which leads to the longest path of length $4k$ from node $ki$ to node $-k$. That is, all the wraparound edges and the edges between the nodes $-k$ and $k$ are used.
The depth $4k$ can be reduced into $3k-1$ by locating the root node approximately in the middle of $B1 \cup R \cup B2 \cup S$, i.e. $R = \lceil k/2 \rceil + \lfloor k/2 \rfloor i$, when $k$ is odd. That is, we use $k-1$ wraparound edges instead of $2k$. We conclude that the depth of $CIST_1$ is $2k$ (horizontal edges) + $k-1$ (wraparound edges) = $3k-1$, which is the length of the longest path from nodes $R$ to $-k$.

On the other hand, when $k$ is even, choosing $R = (k/2) + (k/2)i$ then it leads to have a path of length $3k$ consisting of horizontally connected nodes from $-k$ to $k$ with length $2k$ plus the $k$ wraparound edges. Observe that the second longest path has a length of $3k-2$. That is, from $R$ we can reach the node $-(k-1)-i$ using $k-1$ wraparound edges and the horizontally connected nodes from $-(k-1)-i$ to $(k-1)-i$ with length $2k-2$, which leads the total path length is $3k-3$. Both path lengths $3k$ and $3k-3$ can be, in respective order, $3k-2$ and $3k-1$ when choosing $R = (k/2+1) + (k/2-1)i$. Thus, we get the minimum longest path length from node $R$ to $(k-1)-i$ with length $3k-1$.
\end{proof}

\section{Sequential and Parallel Construction Algorithms}
\label{sec:CISTsConstruction}
In this section, we present the sequential and the parallel construction algorithms of CISTs in Gaussian networks. Both algorithms are based on the connectivity of CISTs according to the subsets mentioned in Section \ref{sec:CISTs}.

\subsection{Sequential Construction}
\label{subsec:sequentialConstruction}
To sequentially construct the CISTs, let $j=0$ (i.e., no rotation). Each receiving node forwards the message to its neighbor(s) using the proper direction(s) (output port(s)) based on its subset belonging to. Table \ref{table:CISTSequentialConstruction} illustrates the required directions to forward the message in order to sequentially construct the first completely independent spanning tree.

\begin{table}[!h]
\centering
\caption{Sequential construction directions.}
\label{table:CISTSequentialConstruction}
\begin{tabular}{|c|c|c|}                       \hline
Receiving node in        & Direction        \\ \hline
$B1^j$                   & $(-1)^j, (+i)^j$ \\ \hline
$B2^j \cup B3^j$         & $(-1)^j, (+1)^j$ \\ \hline
$B4^j$                   & $(-i)^j, (+1)^j$ \\
except node $-ki$, $j=0$ & $(-i)^j$         \\
except node $ki$, $j=1$  & $(+1)^j$         \\ \hline
$L1^j \cup L2^j$         & stop             \\ \hline
$S^j \cup D1^j$          & $(-1)^j$         \\ \hline
$D2^j$                   & $(+1)^j$         \\ \hline
\end{tabular}
\end{table}

Algorithm \ref{alg:sequentialConstruction} illustrates the sequential process of forming the first and second Completely Independent Spanning Trees. The algorithm requires three parameters: $c$ as the receiving node which receives the construction packet from the sender, Table \ref{table:CISTSequentialConstruction} to determine which direction (output port) to forward the construction packet based on the location of $c$, and $j$ either 0 or 1 to know which tree to be constructed.

\begin{algorithm}[!h]
\caption{SequentialConstruction($c$)}
\label{alg:sequentialConstruction}
\begin{algorithmic}[1]
\Require $c$ as receiving node
\Require Table \ref{table:CISTSequentialConstruction}
\Require $j$ for rotation
\If{$c \in \{L1^j \cup L2^j \}$}
    \State Stop
\EndIf
\If{$c \in B1^j$}
    \State goto $(-1)^j$ and $(+i)^j$ direction
\EndIf
\If{$c \in \{ B2^j \cup B3^j$ \}}
    \State goto $(-1)^j$ and $(+1)^j$ direction
\EndIf
\If{$c \in B4^j$ and ($c \notin -ki$ and $j = 0$) or ($c \notin +k$ and $j = 1$)}
    \State goto $(-i)^j$ and $(+1)^j$ direction
\EndIf
\If{$c \in B4^j$ and $c \in -ki$ and $j = 0$}
    \State goto $(-i)^j$ direction
\EndIf
\If{$c \in B4^j$ and $c \in +k$ and $j = 1$}
    \State goto $(+1)^j$ direction
\EndIf
\If{$c \in \{ S^j \cup D1^j$}
    \State goto $(-1)^j$ direction
\EndIf
\If{$c \in D2^j$}
    \State goto $(+1)^j$ direction
\EndIf
\end{algorithmic}
\end{algorithm}

The sequential construction works as follows. At beginning, based on $j$ value, the root node $\in R$ sends a construction message to the directions (output ports) $-1$, $+i$, and $+1$ when $j = 0$ (or, $-i$, $-1$, $+i$ when $j = 1$). After that, each receiving node $c$ performs Algorithm \ref{alg:sequentialConstruction} according to the following steps. (1) $c$ receives the construction message from the sender with the value of $j$. (2) $c$ looks up the Table \ref{table:CISTSequentialConstruction} to determine the direction(s) that the construction message must be forwarded. (3) based on the location of $c$ (i.e., the subset it belongs to), $c$ forwards the construction message to the proper direction(s) (output port(s)). Finally, (4) the algorithm stops when $c$ is located in either $L1^j$ or $L2^j$. Thus, the CIST based on the value of $j$ (whether the first or second CIST) is constructed once the Algorithm \ref{alg:sequentialConstruction} stops.

Algorithm \ref{alg:sequentialConstruction} takes $O(n)$ time complexity due to the table lookup to determine the proper forwarding direction. And, it takes $O(1)$ communication complexity in each node (i.e., a total of $O(d)$, where $d$ is the depth of the tree).

\begin{example}
Fig. \ref{figure:CIST1} and Fig. \ref{figure:CIST2} illustrate the first and second Completely Independent Spanning Trees in Gaussian network generated by $\alpha = 4+5i$.
\end{example}

\begin{figure}[!h]
\centering
\includegraphics[width=0.45\textwidth]{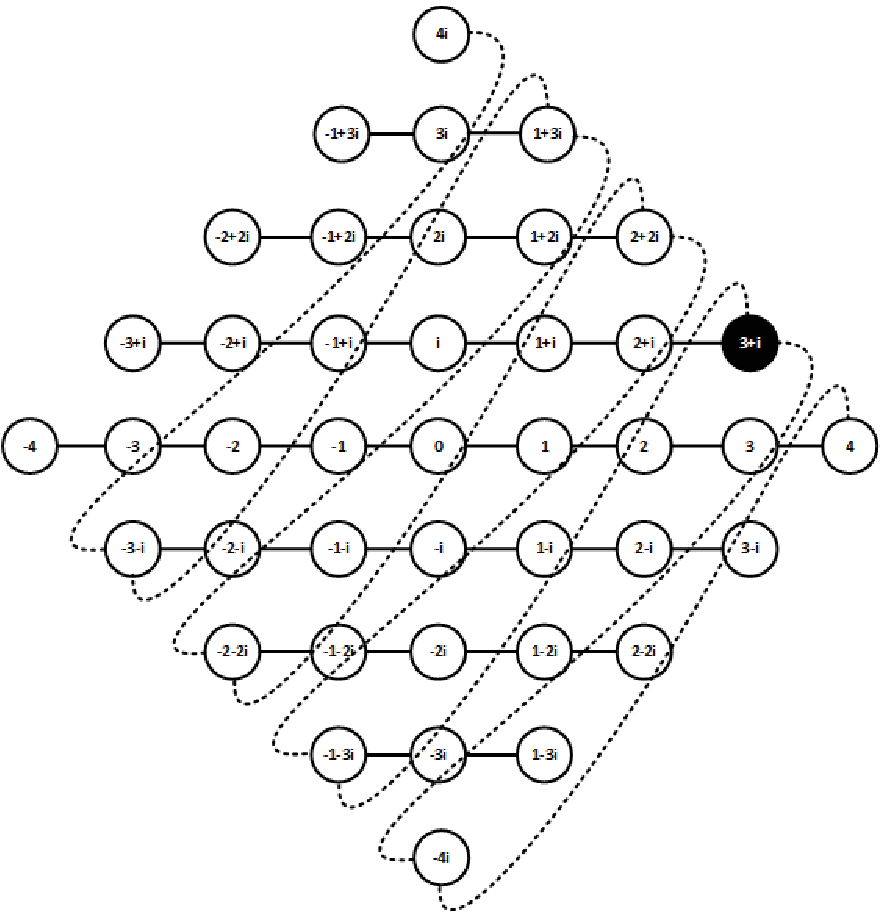}
\caption{First CIST in Gaussian network with $\alpha = 4+5i$.}
\label{figure:CIST1}
\end{figure}

\begin{figure}[!h]
\centering
\includegraphics[width=0.45\textwidth]{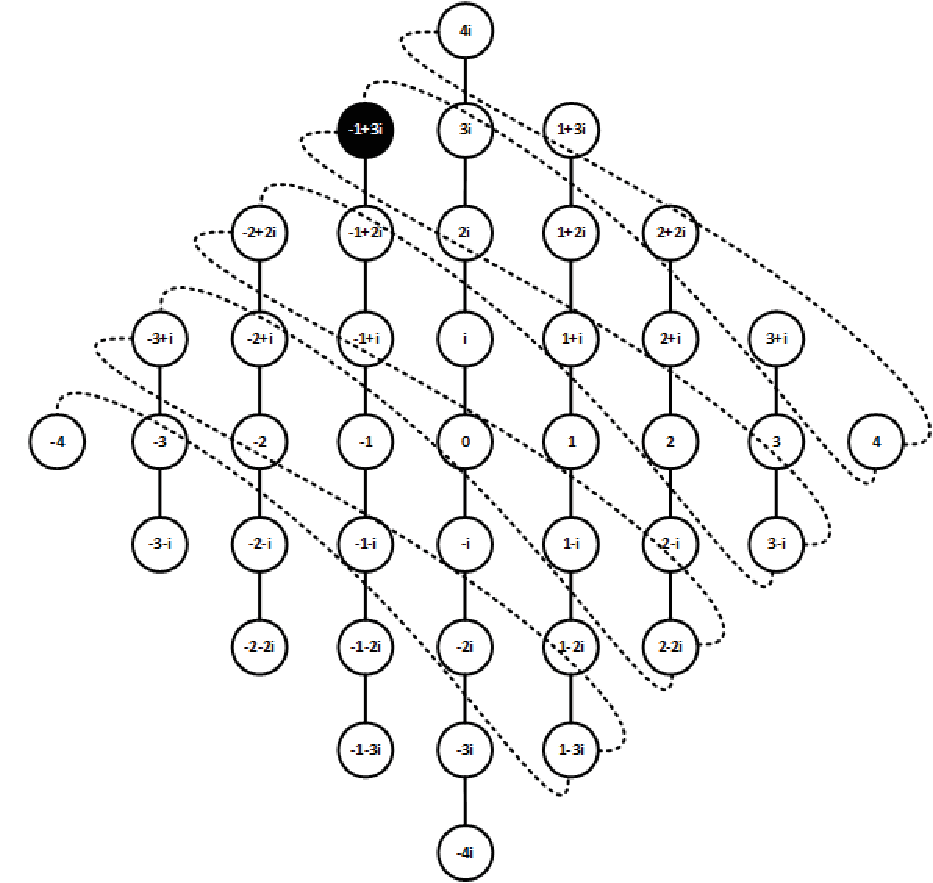}
\caption{Second CIST in Gaussian network with $\alpha = 4+5i$.}
\label{figure:CIST2}
\end{figure}

\subsection{Parallel Construction}
\label{subsec:parallelConstruction}
In parallel construction of CISTs in dense Gaussian networks, it is assumed that each node has information about the network (that is, $\alpha$) and Table \ref{table:CISTParentChildParallelConstruction}. Initially, the information can be distributed among the nodes in the network using the broadcasting algorithm in \cite{2008Gaussian}, which takes $O(k)$ where $k$ is the diameter of the network. The parallel construction algorithm is performed simultaneously by each node in the network. To construct the $CIST_1$, let $j = 0$, then each node uses Table \ref{table:CISTParentChildParallelConstruction} to determine its parent and child nodes based on its location in the subset of the network. The current node connects to its parent and child nodes once they are determined. In the end, we obtain the CISTs in dense Gaussian networks. To obtain $CIST_2$, we apply the same construction technique, but when $j = 1$. Algorithm \ref{alg:parallelConstruction} illustrates the construction of CISTs in dense Gaussian networks.

\begin{table}[!h]
\centering
\caption{Parent and child nodes for the CISTs.}
\label{table:CISTParentChildParallelConstruction}
\begin{tabular}{|c|c|c|} \hline
Node $c$ in subset      & Parent                    & Child                               \\ \hline
$R^j$                   & $-$                       & $(+1)^j,(+i)^j,(-1)^j$              \\ \hline
$B1^j$                  & $(+1)^j$                  & $(-1)^j,(+i)^j$                     \\ \hline
$B2^j$                  & $(+i)^j$                  & $(-1)^j,(+1)^j$                     \\ \hline
$B3^j$                  & $(-i)^j$                  & $(-1)^j,(+1)^j$                     \\ \hline
\multirow{3}{*}{$B4^j$} & \multirow{3}{*}{$(-1)^j$} & $(-i)^j,(+1)^j$                     \\
                        &                           & node $-ki$ has only $(-i)^j$, $j=0$ \\
                        &                           & node $ki$ has only $(+1)^j$, $j=1$  \\ \hline
$S^j$                   & $(+i)^j$                  & $(-1)^j$                            \\ \hline
$L1^j$                  & $(+1)^j$                  & $-$                                 \\ \hline
$L2^j$                  & $(-1)^j$                  & $-$                                 \\ \hline
$D1^j$                  & $(+1)^j$                  & $(-1)^j$                            \\ \hline
$D2^j$                  & $(-1)^j$                  & $(+1)^j$                            \\ \hline
\end{tabular}
\end{table}

\begin{algorithm}[!h]
\caption{ParallelConstruction($j$, $c$, Table \ref{table:CISTParentChildParallelConstruction})}
\label{alg:parallelConstruction}
\begin{algorithmic}[1]
\Require $j$ is the spanning tree number (number of rotations)
\Require $c$ is the current node applying this algorithm
\Require Table \ref{table:CISTParentChildParallelConstruction}
\State Rotate the Table \ref{table:CISTParentChildParallelConstruction} based on the value of $j$
\State Lookup Table \ref{table:CISTParentChildParallelConstruction} and determine the subset that $c$ belongs to and then get the parent and child nodes of $c$
\State  Connect the links of the parent and child nodes that are incident to $c$ with $c$
\end{algorithmic}
\end{algorithm}


Since the total number of subsets is 10, then the time complexity of Algorithm \ref{alg:parallelConstruction} is constant $O(10)$, where 10 is the total number of subsets of the network. This is because node $c = x_c + y_c\rho$ compares its $x_c$ and $y_c$ with the conditions of each subset to determine its location. In addition, the communication complexity is $O(1)$ since all nodes work in parallel. The general complexity of communication takes $O(k)$ due to the initial step mentioned above.

\section{Routing using CISTs}
\label{sec:routingUsingCISTs}
In this section, we describe how the routing is performed using the CISTs in Gaussian networks. Table \ref{table:routing} illustrates the path to the destination node $D$ when $D$ in a certain subset. To determine the $D$'s subset, we compare the $x$ and $y$ of $D$ with the subsets condition described in Section \ref{sec:CISTs}. The corresponding path $P$ from $R$ to $D$ can be obtained once the subset is determined.

Usually, routing a message starts at the root node $R$ to a specific destination node $D$. Based on the CISTs construction discussed in Section \ref{sec:CISTsConstruction}, dividing the network into subsets helps in designing the routing method. To route a message $m$ from $R$ to $D$ using the first CIST, we perform the following steps. At the beginning, in Algorithm \ref{algorithm:initRoute}, the $R$ looks up the Table \ref{table:routing} to obtain the path based on the subset where $D$ belongs to. A path $P$ is a list of elements consisting of pair of direction ($d$) and steps ($s$), i.e. [$(d,s)$, \dots, $(d,s)$], which is described in Table \ref{table:routing} as a sequence of $d^s$. The direction $d$ determines which edge can be used to forward the message to a neighboring node, while the steps $s$ indicate the number of hops required to traverse in the specified direction. After that, $R$ sends $D$, $P$, and $m$ to its neighboring node using direction $d$ with the required $s-1$ steps, since this step is counted. Then, in Algorithm \ref{algorithm:route}, each receiving neighboring node continues sending $m$ on the same direction $d$ and counting down $s$ until $s=0$. When $s=0$, the receiving node gets the next $(d,s)$ from $P$ when $P$ is not empty. The process continues until $P$ is empty and $s=0$, which means that the message has been delivered to $D$. The same method can be applied to route the message in the second CIST, except that we need to perform the rotation as discussed in Lemma \ref{lemma:CIST2}.

\begin{algorithm}[!h]
\caption{InitRoute($D$)}
\label{algorithm:initRoute}
\begin{algorithmic}[1]
\Require $D = x + yi$ as a destination node
\State Determine $D$'s subset using Table \ref{table:routing} and get the path $P$
\State $(d,s) = P.pop()$
\State Send $m$ through $d$ using Rout($D$, $P$, $s-1$, $m$)
\end{algorithmic}
\end{algorithm}

\begin{algorithm}[!h]
\caption{Route($D$, $P$, $s$, $m$)}
\label{algorithm:route}
\begin{algorithmic}[1]
\Require $D = x + yi$ as a destination node
\Require $P$ is the path to $D$
\Require $s$ is the number of steps to be counted
\Require $m$ is the message to be delivered
\If{$s = 0 \And P \neq \phi$}
\State $(d,s) = P.pop()$
\State Send $m$ through $d$ using Rout($D$, $P$, $s-1$, $m$)
\ElsIf{$s = 0$}
\State Consume $m$
\Else
\State Send $m$ through $d$ using Rout($D$, $P$, $s-1$, $m$)
\EndIf
\end{algorithmic}
\end{algorithm}

As for the theoretical analysis, the time complexity of Algorithm \ref{algorithm:initRoute} is $O(1)$ since we have 9 subsets' conditions to be tested, and its communication complexity is $O(1)$ since we have 1 send. The time and communication complexities of Algorithm \ref{algorithm:route} is $O(1)$.

\begin{table}[!h]
\centering
\caption{Routing paths using CISTs.}
\label{table:routing}
\begin{tabular}{|c|c|c|} \hline
Node in subset         & Path (steps)                                                      \\ \hline
$B1$                   & $((+i)(-1))^{|y - y_r|}$                                          \\ \hline
$B2$                   & $((+1)(-i))^{|y - y_r|}$                                          \\ \hline
$B3$                   & $(+i) ((-1)(+i))^{k - |y| - |y_r|}$                               \\ \hline
$B4$                   & $(+1) ((-i)(+1))^{k - |y|}$                                       \\ \hline
$S$                    & $((+1)(-i))^{|y - y_r|}$                                          \\ \hline
$L1 \cup D1$           & \multirow{2}{*}{$((+i)(-1))^{|y - y_r|}$ $(-1)^{k-y-x}$}          \\
$y \geq y_r$           &                                                                   \\ \hline
$L1 \cup D1$           & \multirow{2}{*}{$((+1)(-i))^{|y - y_r|}$ $(-1)^{k-y-x}$}          \\
$0 \leq y < y_r$       &                                                                   \\ \hline
$L2 \cup D2$           & \multirow{2}{*}{$(+i)((-1)(+i))^{k-|y|-|y_r|}(+1)^{k-|y|+x}$}     \\
$ -x_r \leq y < 0 $    &                                                                   \\ \hline
$L2 \cup D2$           & \multirow{2}{*}{$(+1)((-i)(+1))^{|y_r| - |x| - 1}(+1)^{k-|y|+x}$} \\
$ -k \leq y < -x_r $   &                                                                   \\ \hline
\end{tabular}
\end{table}

The height in \cite{pai2022configuring} is approximately $n/2$ where $n$ is $N(\alpha) = (2k^2 + 2k + 1)$, whereas the height of our constructed trees is $3k-1$ where $k$ is the diameter of the network. Table \ref{table:depthComparison} shows the difference in the tree depths of the construction algorithm in \cite{pai2022configuring} and the proposed one. It is obvious that the depth of the proposed construction algorithm is much lower than the one in \cite{pai2022configuring} when network size is increased with at least 50\% of improvement on larger network sizes. That is, the lower depth we have the less time we get to deliver a message from the root node to the farthest node in the tree. This affects positively in decreasing the average distance of the network.

\begin{table}[!h]
\centering
\caption{Depth comparison}
\label{table:depthComparison}
\begin{tabular}{|c|c|c|c|c|c|c|c|c|c|}
\hline
Network Size & Construction in \cite{pai2022configuring} & Proposed  \\ \hline
$k=2$        &  2                                        & 2         \\ \hline
$k=3$        &  6                                        & 5         \\ \hline
$k=4$        &  12                                       & 8         \\ \hline
$k=5$        &  20                                       & 11        \\ \hline
$k=6$        &  30                                       & 14        \\ \hline
$k=7$        &  42                                       & 17        \\ \hline
$k=8$        &  56                                       & 20        \\ \hline
$k=9$        &  72                                       & 23        \\ \hline
\end{tabular}
\end{table}

\begin{theorem}
Table \ref{table:routing} presents the shortest path from the root node $r = x_r + y_ri \in R$ to any destination node $u = x + yi$.
\end{theorem}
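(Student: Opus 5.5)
Since $CIST_1$ is a spanning tree (Theorem \ref{theorem:CIST2}), there is exactly one path from $r$ to any node $u$ inside it. The plan is therefore to show that, for each row of Table \ref{table:routing}, the listed sequence of moves follows only edges of $E(CIST_1)$ given in Lemma \ref{lemma:CIST1isConnected} and ends at $u$. Uniqueness of tree paths then makes this path the shortest one in the tree, so "shortest path" is read as the shortest path within $CIST_1$, i.e. the tree path. For $CIST_2$ the same argument applies after the rotation of Lemma \ref{lemma:CIST2}: multiply every direction by $i$ and rotate the subsets accordingly. I would therefore argue only for $j=0$.

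The key steps are carried out subset by subset, following the parent/child directions of Table \ref{table:CISTParentChildParallelConstruction}.
\begin{enumerate}
\item \textbf{The boundary staircase $B1\cup R\cup B2\cup S$.} From $R$, the child directions $+i$ and $-1$ at $B1$ nodes make alternating $(+i)(-1)$ steps climb the staircase $|x|+|y|=k$ toward $ki$. Each pair changes $y$ by one, so reaching $u\in B1$ takes $|y-y_r|$ pairs. The argument for $B2$ and $S$ is symmetric, using $(+1)(-i)$.
\item \textbf{The rows $D1\cup L1$.} Every $D1$ node has the unique child direction $-1$, so each row $y\ge 0$ is a horizontal chain hanging off the staircase node with the same $y$. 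One first reaches the boundary node $(k-y)+yi$ using the staircase prefix, then moves $(-1)^{k-y-x}$. The two cases $y\ge y_r$ and $0\le y<y_r$ correspond to going up through $B1$ or down through $B2$.
\item \textbf{The lower half $B3, B4, D2, L2$.} These are reached through the wraparound edges. The $+i$ edge out of $R$ (or the staircase nodes above it) wraps, modulo $\alpha=k+(k+1)i$, to a node of $B3$. I would verify this explicitly by reducing $x+(y+1)i$ modulo $\alpha$ when $x+y=k$, which gives the point $x-k+(y+1-k-1)i$, up to the chosen representative. The same reduction applied to the $+1$ wrap gives the entry into $B4$. From there, the alternating patterns and then $(+1)$ along each $D2$ row (child direction $+1$) end at $u$.
\end{enumerate}

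For each case I will count the steps and check that the exponents in the table agree with coordinate differences, as in the cases above. Finally, I would note consistency with Lemma \ref{lemma:CISTdepth}: the maximum of these lengths is $3k-1$.

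The main obstacle is the wraparound bookkeeping in the lower half. This means confirming, via reduction modulo $\alpha$, exactly which node each wrap edge lands on, and that the counts $k-|y|-|y_r|$ and $|y_r|-|x|-1$ are correct, especially at the case boundaries $y=-x_r$ and at the exceptional node $-ki$. I would first work out the case $k=4$ against Fig.~\ref{figure:CIST1}, then generalize the modular reductions.
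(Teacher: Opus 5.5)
\textbf{Gap: two rows of Table~\ref{table:routing} are false for every $k\ge 5$, so your row-by-row check cannot be completed.} Your strategy is the right one: read ``shortest'' as the unique path in $CIST_1$, and check by explicit reduction modulo $\alpha$ that every prescribed move is an edge of $E(CIST_1)$ and that the walk ends at $u$. The two bad rows fail whenever $y_r\ge 2$.

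Doing your own wrap computation gives $B4=\{b_m=(1-y_r+m)-(x_r+1+m)i : 0\le m\le y_r-1\}$, with $b_0\equiv r+1$. For $1\le m\le y_r-1$, the $-i$ edge at $b_{m-1}$ wraps to the $B2$ node $(x_r+m)+(y_r-m)i$, and the $+1$ edge there wraps to $b_m$. Hence the tree path to $b_m$ is $(+1)\bigl((-i)(+1)\bigr)^{m}$ with $m=|y|-x_r-1$. The table's exponent $k-|y|=y_r-1-m$ runs in the opposite order. For the root's own child $u=r+1$, the table prescribes a walk of length $2y_r-1$ that ends at $-ki$.

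Likewise, in the lower $L2\cup D2$ row ($-k\le y<-x_r$), row $y$ hangs off $b_m$ with $m=|y|-x_r-1$. The table's $|y_r|-|x|-1$ equals this only when $|x|+|y|=k$, i.e.\ only for $u\in L2$. Concretely, take $k=5$, $r=3+2i$, $u=-4i\in D2$. The table's walk $(+1)(-i)(+1)(+1)$ uses the $+1$ edge at $-5i$, which the exception for node $-ki$ in Lemma~\ref{lemma:CIST1isConnected} removes from $CIST_1$, and it ends at $-5$. The tree path is $(+1)(+1)$.

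Your planned calibration on $k=4$ will not expose this. For $k\le 4$ one has $y_r=1$, $B4=\{-ki\}$, and the lower $L2\cup D2$ row is empty. So, as printed, the statement is true only for $k\in\{3,4\}$. For general $k$ you must first replace both offending exponents by $|y|-x_r-1$; after that your verification goes through.

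The paper's proof does not fill this gap. It adds the blocks as unreduced lattice vectors and compares them with $u-r$ and a Manhattan lower bound. That ignores the wraparound (for $u\in B3$ one actually has $u-r=-w+(w+1)i-\alpha$) and never checks where the wrap edges land.

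Your tree-path reading is also the only viable one, because the paths are not shortest in $G$. For $k=4$ the tree path from $r=3+i$ to $4i$ has length $6$, while $4i\equiv 5$ gives $d_G(r,4i)=3$.
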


\begin{proof}
Each elementary move in the network corresponds to a unit step in the Gaussian integer lattice. That is, 
$(+1) = (1,0), (-1) = (-1,0), (+i) = (0,1), (-i) = (0,-1)$, and linearity of displacement under concatenation of moves. For a path from $r$ to $u$ the required net displacement is $(\Delta x, \Delta y) = (x-x_r, y-y_r)$. The Manhattan distance $d_{\min}(r,u)=|\Delta x|+|\Delta y|$ is a lower bound on any path length since each unit move changes \(|\Delta x|+|\Delta y|\) by at most \(1\). The expression of the path steps in Table \ref{table:routing} represents the route from $r$ to $u$, where the Gaussian integer is the direction of movement and the power is the number of steps (moves) required. That is, $(+i)^m$ means move $m$ steps along the direction $+i$. We verify representative cases; the remaining rows follow the same symmetry argument.

\paragraph*{Case $B1$.}
The path from $r \in R$ to $u \in B1$ is $((+i)(-1))^m$ where $m = |y - y_r|$ is the vertical level difference between the node $u$ and $r$. Each block $((+i)(-1))$ contributes a vector $(-1,1)$, which represents two consecutive moves that are $(0,1)$ then $(-1,0)$. That is, $(0,1)$ is a vertical move from set $R$ or $B1$ to the sets $B3$ and $S$; and, $(0,1)$ is a horizontal move to a node in the set $B1$ coming from the set $B3$, both moves use wraparound links. The $m$ is the displacement of the movements between $r$ and $u$ that represents the number of repetitions of the block (moves) required to reach node $u \in B1$. For any node $u$ in the set $B1$ (under partition symmetries), the displacement $r$ to $u$ satisfies $(\Delta x,\Delta y) = (-m, m)$, and therefore the path reaches $u$. Thus, the path length is $2m$, which is equal to \(|\Delta x|+|\Delta y|\) and is a proper path from $r$ to $u$.

\paragraph*{Case $B2$ and $S$.}
This case is similar to the case $B1$, except that the moves start with $(+1)$ then $(-i)$. That is, each block $((+1)(-i))$ contributes a vector $(1,-1)$, which $(1,0)$ is a horizontal move from sets $R$ or $B2$ to the set $B4$ and $(0,-1)$ is a move from the set $B4$ to the set $B2$. We can apply a similar argument to that in case $B1$ to show that the path from $r \in R$ to $u \in B2 \cup S$ is a proper path and has a length equal to $2m$ where $m = |y - y_r|$.

\paragraph*{Case $B3$ and $B4$.}
This case is similar to the cases in the above and we can apply the same argument, except that it goes one step through direction $(+i)$ to move to the set $B3$ then $w$ steps of block $((-1)(+i))^{w}$ to move back and forth between the sets $B3$ and $B1$ where $w = k - |y| - |y_r|$ is the vertical level difference between the nodes $u$ and $r$. The displacement is $(+i) + w(-1,1) = (-w,w+1)$, and this equals $(\Delta x, \Delta y)$ for nodes in the set $B3$. The number of moves is $1+2w = |\Delta x|+|\Delta y|$, so the path is proper. As for $B4$ we follow the same argument, but for one more through $(+1)$ and $w$ moves for block $((-i)(+1))$ where $w = k - |y|$.

\paragraph*{Case $L1 \cup D1$.}
In this case, the destination node $u$ is located in either the sets $L1$ or $D1$. When $y \geq y_r$, the moves are divided into two parts. The first part, the moves are from $r \in R$ to nodes in the set $B1$ using block $((+i)(-1))$ as in case $B1$ until the vertical level of $u$ is reached; the same argument of case $B1$ can be used. Then (the second part), the moves are through the direction $(-1)$ by $k - y - x$ steps, which is the required number of moves from any node $v = v_x+v_yi \in B1$ to node $u \in L1 \cup D1$ on the same vertical level. That is, when $v \ in B1$ then $v_x+v_y=k$. That is, the distance to $u$ from $v$ can be at most $2k$. Thus, $k - y - x$ is the required number of steps to move from $v$ to $u$. On the other hand, when $0 \geq y < y_r$ we follow the same argument, except that for block $((+1)(-i))$ where it moves from $r \in R$ to $u \in B4$.

\paragraph*{Case $L2 \cup D2$.}
The same argument as in case $L1 \cup D1$ can be followed to prove this case.

The above arguments confirm that the paths are proper. The following shows that the paths in Table \ref{table:routing} are shortest. Let $T = (V, E)$ be any CIST in Gaussian network. By definition, a tree is an acyclic and connected graph. For any two vertices $u, v \in V$, since $T$ is connected, there exists at least one path between $u$ and $v$. Because $T$ is acyclic, this path is unique. If there were another shorter path between $u$ and $v$, it would be distinct from the unique path, creating a contradiction. Therefore, the unique path between any two vertices in a tree must be the shortest possible path.

Therefore, the constructed paths are valid and attain the shortest possible length.
\end{proof}

\section{Experimental Results}
\label{sec:experimentalResults}
This section details the experiments conducted to evaluate the performance of the proposed method and it is divided into two subsections. In Subsection \ref{subsec:setup} we discuss the experimental setup where the simulation result is discussed in Subsection \ref{subsec:results}.

\subsection{Experimental Setup}
\label{subsec:setup}
The simulation is implemented using the Python package NetworkX \cite{hagberg2008exploring}, which provides tools for developing, modeling, and visualizing networks as graphs, leveraging standard graph theory functions and algorithms.

For this study, we assume that all network links are full-duplex, meaning that nodes can send and receive data simultaneously. The construction algorithms were tested under three scenarios: (1) no faulty nodes in the network, (2) one faulty node present, and (3) two faulty nodes present. In cases involving faulty nodes, we applied a brute-force approach, testing all possible locations of the faulty node(s). The network sizes tested include: $3+4i$, $4+5i$, $5+6i$, $6+7i$, $7+8i$, $8+9i$, $9+10i$, $10+11i$, $11+12i$, and $12+13i$.

In case of the absence of a faulty node, we calculate the shortest-path distance from a root node, say $r$, to each node $u$ in the first CIST (CIST1). The same calculation is performed on the second CIST (CIST2). After that, for each node $u$ in the network, we compare the path length from $r$ to $u$ in CIST1 with the path length from $r$ to $u$ in CIST2 to obtain the maximum distance between CISTs for the paths from $r$ to each node $u$. In the end, since we have the maximum distances to each node $u$ in the network, the average of the maximum distance is computed.

In case of the presence of a faulty node, a similar method is used as in case of the absence of a faulty node, except that the node failure is counted in all possible locations. The brute-force approach is applied to measure the average maximum steps to deliver a message from the root node, say $r$, to all other nodes in the network as follows. Once the network is generated and the CISTs are constructed, a node is selected as a failure node in the network. Using CIST1, the path lengths (distance) from $r$ to each node $u$ in CIST1 are calculated. After that, the faulty node is returned with its links as an active node and another node is selected as a faulty node. The path lengths from $r$ to each node $u$ are again calculated,, where the values of each path length are updated to the maximum distances compared with the previous computed distances. This process continues until all possible faulty node locations are counted. The same method is applied using CIST2 with the distance values updated to the maximum. Finally, the average of the maximum distances is computed.

In case of the presence of two faulty nodes, a similar approach to the case of one faulty node is followed, except that all possible locations of two faulty nodes are considered.

\subsection{Experimental Outcomes}
\label{subsec:results}
Table \ref{table:depthComparison} compares the tree depths of the constructed CISTs across different network sizes, ranging from $k=2$ ($\alpha = 2+3i$) to $k=9$ ($\alpha = 9+10i$). The proposed method consistently results in shallower trees compared to the method in \cite{pai2022configuring}, which contributes significantly to the improved network performance.

The experimental results demonstrate the effectiveness of the proposed method in constructing CISTs for networks with varying sizes and fault conditions. The method consistently outperforms previous approaches, particularly in terms of reducing the depth of the constructed trees and minimizing the number of steps required for message delivery. These improvements are evident in all tested cases, including scenarios with faulty nodes. The reduction in tree depth plays an important role in improving overall network performance, and the demonstrated improvement of at least 26.56\% over previous work further validates the robustness of our approach. This makes the proposed method a viable solution for optimizing communication in fault-tolerant networks.

\begin{table*}[!h]
\centering
\caption{Average maximum number of steps to route a message in all trees.}
\label{table:avgMaxStepsAllPort}
\resizebox{\textwidth}{!}{
\begin{tabular}{|c|c|c|c|c|c|c|c|c|c|c|c|c|c|}
\hline
$\alpha$  & $3+4i$ & $4+5i$ & $5+6i$ & $6+7i$ & $7+8i$ & $8+9i$ & $9+10i$ & $10+11i$ & $11+12i$ & $12+13i$ \\ \hline
No Faulty & 2.40   & 3.32   & 3.80   & 4.69   & 5.27   & 6.12   & 6.73    & 7.57     & 8.22     & 9.04     \\ \hline
1 Faulty  & 3.72   & 5.17   & 6.84   & 8.20   & 10.00  & 11.31  & 13.19   & 14.46    & 16.39    & 17.63    \\ \hline
2 Faulty  & 3.26   & 4.74   & 6.37   & 7.77   & 9.52   & 10.87  & 12.70   & 14.00    & 15.89    & 17.16    \\ \hline
\end{tabular}
}
\end{table*}

\begin{figure}[!h]
\centering
\includegraphics[scale=0.4]{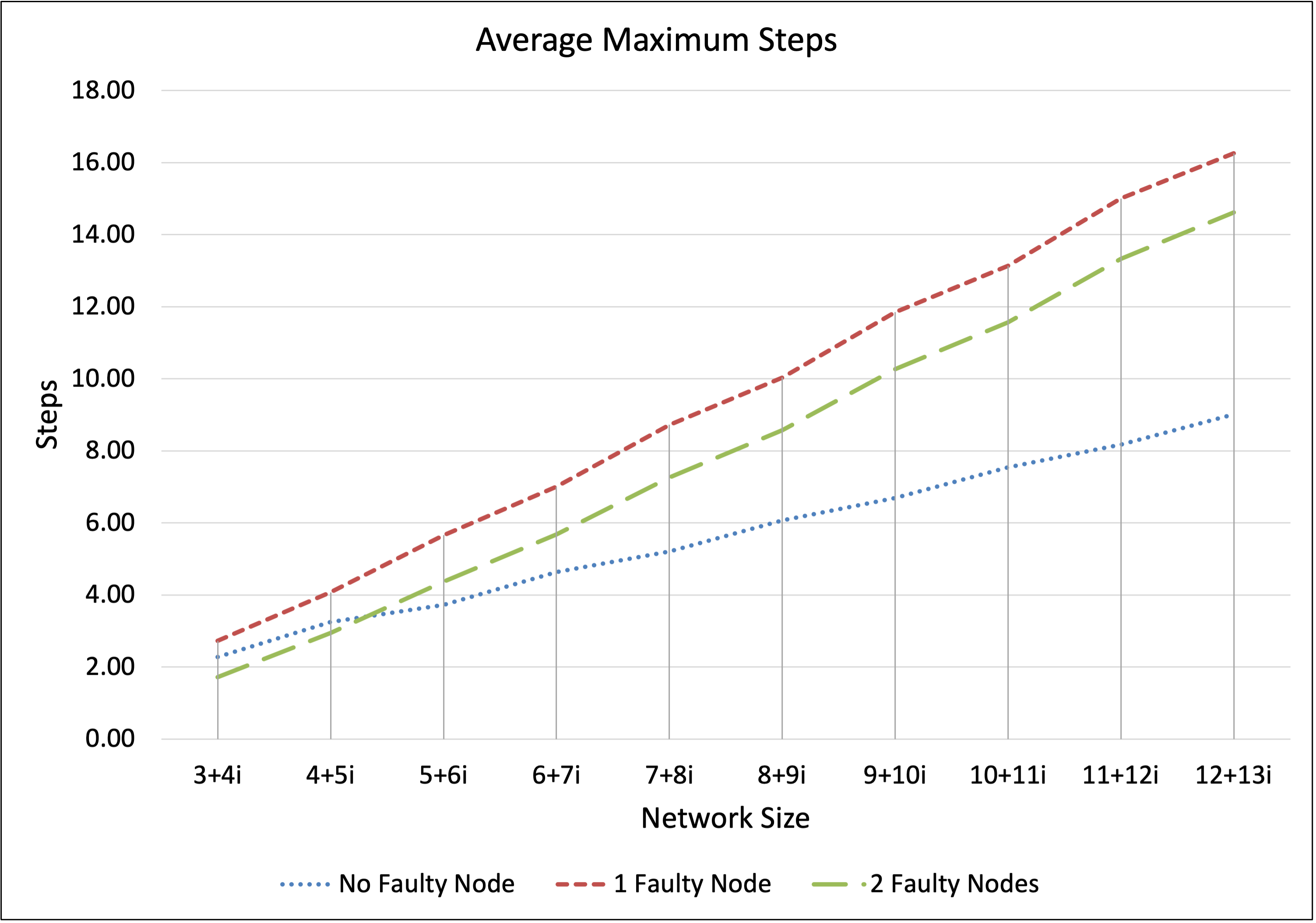}
\caption{Average Maximum Steps.}
\label{figure:acgSteps}
\end{figure}

\begin{table*}[!h]
\centering
\caption{No Faulty Nodes Comparisons.}
\label{table:avgMaxStepsAllPort0F}
\resizebox{\textwidth}{!}{
\begin{tabular}{|l|c|c|c|c|c|c|c|c|c|c|}
\hline
Network Size   & $3+4i$ & $4+5i$ & $5+6i$ & $6+7i$ & $7+8i$ & $8+9i$ & $9+10i$ & $10+11i$ & $11+12i$ & $12+13i$ \\ \hline
Proposed       & 2.40   & 3.32   & 3.80   & 4.69   & 5.27   & 6.12   & 6.73    & 7.57     & 8.22     & 9.04     \\ \hline
Referenced \cite{pai2022configuring} & 4.83  & 7.38  & 10.83 & 14.70 & 19.50 & 24.71 & 30.84 & 37.39  & 44.85  & 52.73  \\ \hline
Improvement \% & 50.31 & 55.05 & 64.88 & 68.07 & 73.00 & 75.24 & 78.16 & 79.94  & 81.67  & 82.85  \\ \hline
\end{tabular}
}
\end{table*}

\begin{figure*}[!h]
\centering
\includegraphics[scale=0.4]{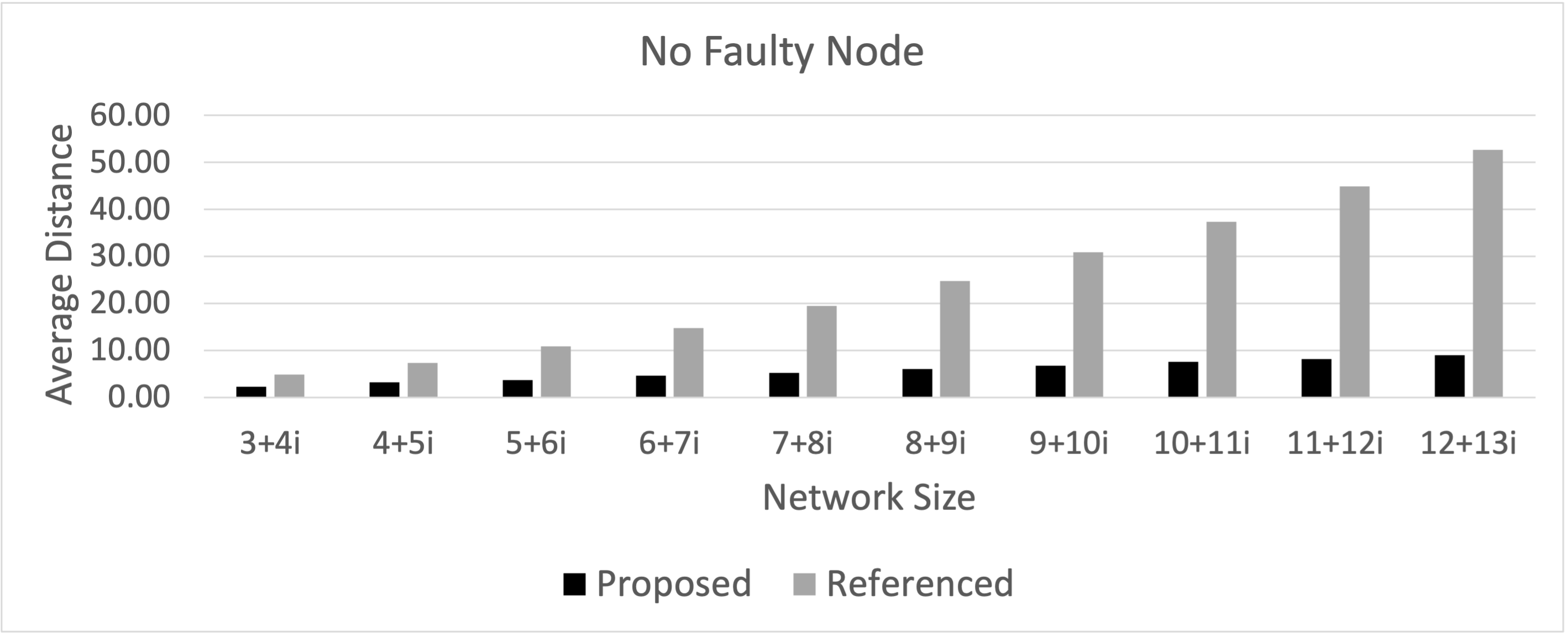}
\caption{No Faulty Nodes Comparisons.}
\label{figure:0fcmp}
\end{figure*}

\begin{table*}[!h]
\centering
\caption{One Faulty Node Comparisons.}
\label{table:avgMaxStepsAllPort1F}
\resizebox{\textwidth}{!}{
\begin{tabular}{|l|c|c|c|c|c|c|c|c|c|c|}
\hline
Network Size   & $3+4i$ & $4+5i$ & $5+6i$ & $6+7i$ & $7+8i$ & $8+9i$ & $9+10i$ & $10+11i$ & $11+12i$ & $12+13i$ \\ \hline
Proposed       & 3.72   & 5.17   & 6.84   & 8.20   & 10.00  & 11.31  & 13.19   & 14.46    & 16.39    & 17.63    \\ \hline
Referenced \cite{pai2022configuring} & 5.07  & 7.69  & 11.29 & 15.23 & 20.16 & 25.47 & 31.69 & 38.34  & 45.86  & 53.87  \\ \hline
Improvement \% & 26.56 & 32.73 & 39.43 & 46.13 & 50.38 & 55.58 & 58.37 & 62.28  & 64.26  & 67.27  \\ \hline
\end{tabular}
}
\end{table*}

\begin{figure*}[!h]
\centering
\includegraphics[scale=0.4]{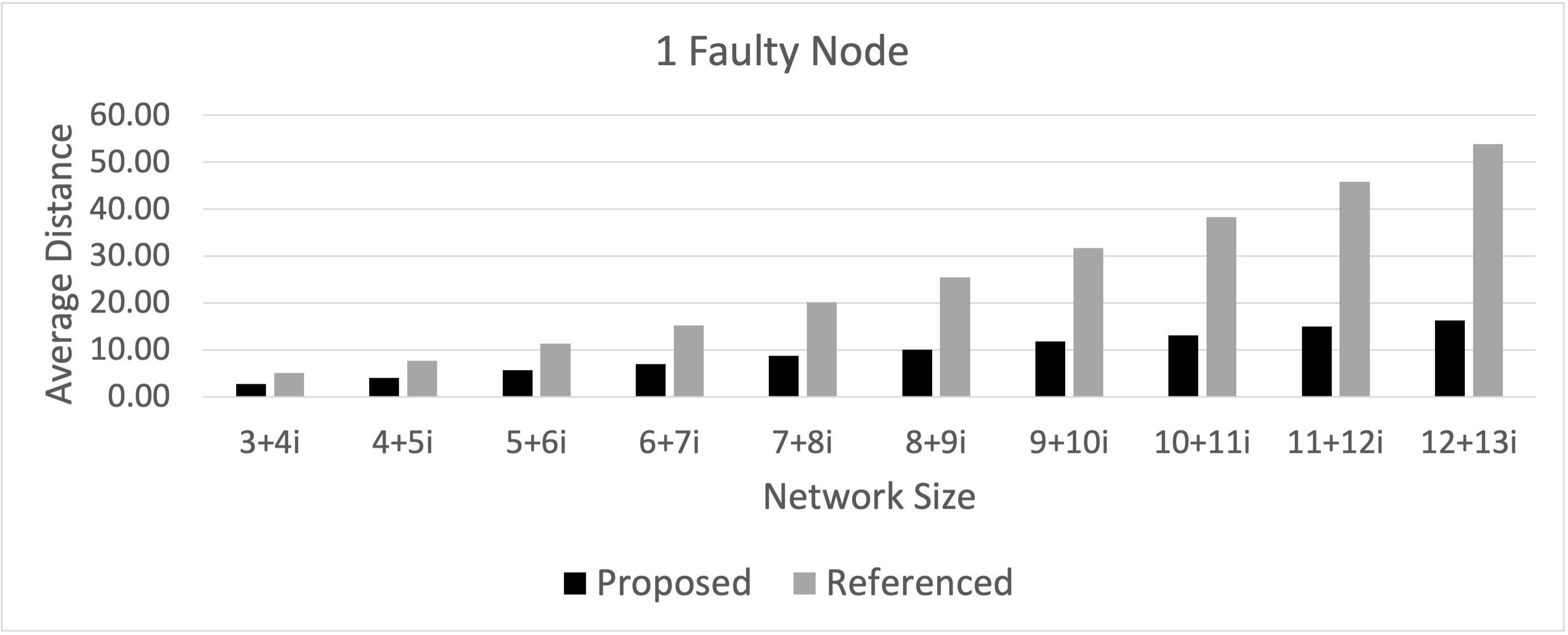}
\caption{One Faulty Node Comparisons.}
\label{figure:1fcmp}
\end{figure*}

\begin{table*}[!h]
\centering
\caption{Two Faulty Nodes Comparisons.}
\label{table:avgMaxStepsAllPort2F}
\resizebox{\textwidth}{!}{
\begin{tabular}{|l|c|c|c|c|c|c|c|c|c|c|}
\hline
Network Size   & $3+4i$ & $4+5i$ & $5+6i$ & $6+7i$ & $7+8i$ & $8+9i$ & $9+10i$ & $10+11i$ & $11+12i$ & $12+13i$ \\ \hline
Proposed       & 3.26   & 4.74   & 6.37   & 7.77   & 9.52   & 10.87  & 12.70   & 14.00    & 15.89    & 17.16    \\ \hline
Referenced \cite{pai2022configuring} & 4.92  & 7.53  & 11.07 & 14.93 & 19.79 & 25.01 & 31.15 & 37.69  & 45.11  & 52.99  \\ \hline
Improvement \% & 33.64 & 36.99 & 42.44 & 47.94 & 51.88 & 56.55 & 59.23 & 62.85  & 64.77  & 67.62  \\ \hline
\end{tabular}
}
\end{table*}

\begin{figure*}[!h]
\centering
\includegraphics[scale=0.4]{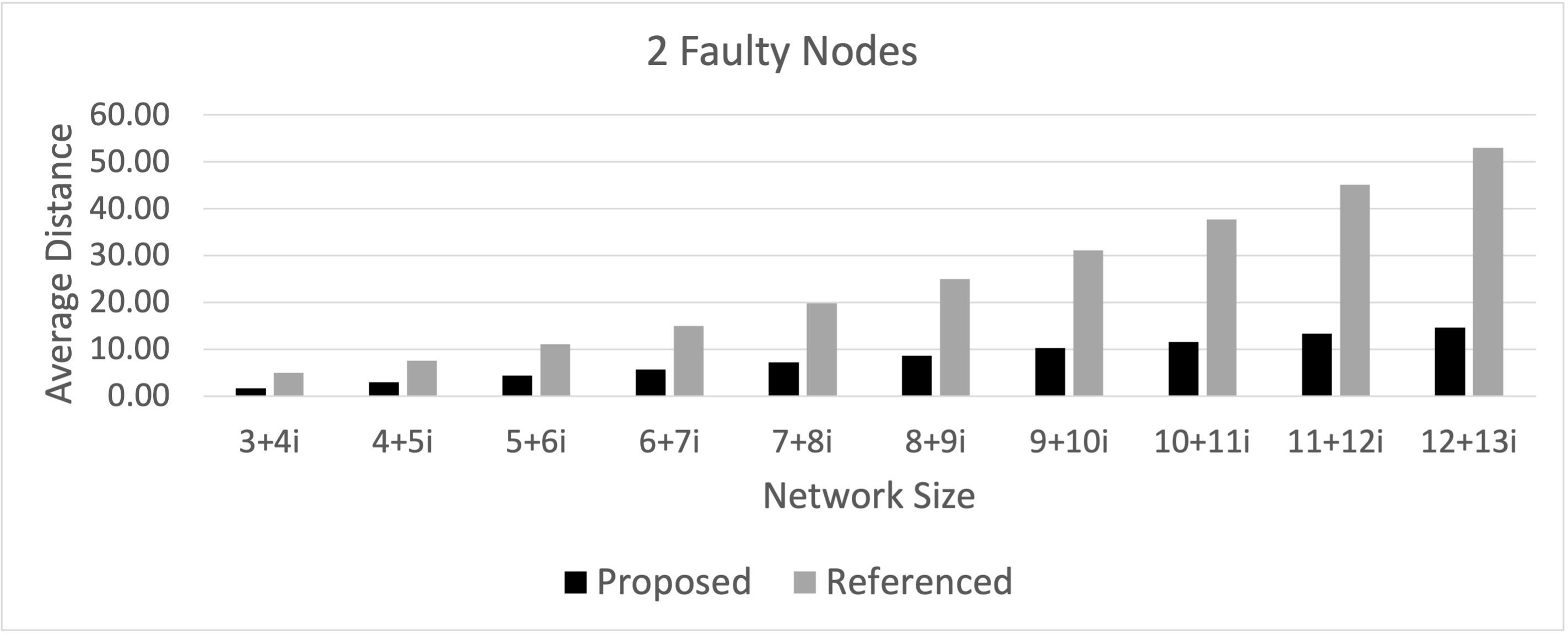}
\caption{Two Faulty Nodes Comparisons.}
\label{figure:2fcmp}
\end{figure*}

\section{Conclusion}
\label{sec:conclusion}
In this research, we addressed the challenge of constructing Completely Independent Spanning Trees (CISTs) within Gaussian networks. By partitioning the network into ten subsets and exploiting its inherent symmetry, we effectively uncovered two CISTs. The first tree was constructed from the network’s subsets, while the second was generated by rotating the first tree in a counterclockwise direction. This approach allowed for the systematic construction of CISTs while maintaining the structural properties of the network.

We introduced both sequential and parallel algorithms for constructing these trees, along with a routing algorithm tailored to the resulting network. A key result of our work is the consistent depth of all constructed trees, which we proved to be $3k-1$. In addition to this, we analyzed the time and communication complexities of the proposed algorithms, demonstrating their efficiency in practice.

In comparison to the previous work by \cite{pai2022configuring}, our method offers significant improvements. Notably, our construction method results in trees with shallower depths and reduced average distances between nodes, which directly enhance network performance. This improvement is critical for optimizing communication in Gaussian networks, especially in fault-tolerant environments, making our approach a more effective and scalable solution for CIST construction.


\bibliography{references}


\begin{thebibliography}{33}
\ifx \bisbn   \undefined \def \bisbn  #1{ISBN #1}\fi
\ifx \binits  \undefined \def \binits#1{#1}\fi
\ifx \bauthor  \undefined \def \bauthor#1{#1}\fi
\ifx \batitle  \undefined \def \batitle#1{#1}\fi
\ifx \bjtitle  \undefined \def \bjtitle#1{#1}\fi
\ifx \bvolume  \undefined \def \bvolume#1{\textbf{#1}}\fi
\ifx \byear  \undefined \def \byear#1{#1}\fi
\ifx \bissue  \undefined \def \bissue#1{#1}\fi
\ifx \bfpage  \undefined \def \bfpage#1{#1}\fi
\ifx \blpage  \undefined \def \blpage #1{#1}\fi
\ifx \burl  \undefined \def \burl#1{\textsf{#1}}\fi
\ifx \doiurl  \undefined \def \doiurl#1{\url{https://doi.org/#1}}\fi
\ifx \betal  \undefined \def \betal{\textit{et al.}}\fi
\ifx \binstitute  \undefined \def \binstitute#1{#1}\fi
\ifx \binstitutionaled  \undefined \def \binstitutionaled#1{#1}\fi
\ifx \bctitle  \undefined \def \bctitle#1{#1}\fi
\ifx \beditor  \undefined \def \beditor#1{#1}\fi
\ifx \bpublisher  \undefined \def \bpublisher#1{#1}\fi
\ifx \bbtitle  \undefined \def \bbtitle#1{#1}\fi
\ifx \bedition  \undefined \def \bedition#1{#1}\fi
\ifx \bseriesno  \undefined \def \bseriesno#1{#1}\fi
\ifx \blocation  \undefined \def \blocation#1{#1}\fi
\ifx \bsertitle  \undefined \def \bsertitle#1{#1}\fi
\ifx \bsnm \undefined \def \bsnm#1{#1}\fi
\ifx \bsuffix \undefined \def \bsuffix#1{#1}\fi
\ifx \bparticle \undefined \def \bparticle#1{#1}\fi
\ifx \barticle \undefined \def \barticle#1{#1}\fi
\bibcommenthead
\ifx \bconfdate \undefined \def \bconfdate #1{#1}\fi
\ifx \botherref \undefined \def \botherref #1{#1}\fi
\ifx \url \undefined \def \url#1{\textsf{#1}}\fi
\ifx \bchapter \undefined \def \bchapter#1{#1}\fi
\ifx \bbook \undefined \def \bbook#1{#1}\fi
\ifx \bcomment \undefined \def \bcomment#1{#1}\fi
\ifx \oauthor \undefined \def \oauthor#1{#1}\fi
\ifx \citeauthoryear \undefined \def \citeauthoryear#1{#1}\fi
\ifx \endbibitem  \undefined \def \endbibitem {}\fi
\ifx \bconflocation  \undefined \def \bconflocation#1{#1}\fi
\ifx \arxivurl  \undefined \def \arxivurl#1{\textsf{#1}}\fi
\csname PreBibitemsHook\endcsname

\bibitem[\protect\citeauthoryear{Dally and Seitz}{1986}]{torus}
\begin{botherref}
\oauthor{\bsnm{Dally}, \binits{W.}},
\oauthor{\bsnm{Seitz}, \binits{C.}}:
The torus routing chip.
Distributed Computing
\textbf{1}
(1986)
\doiurl{10.1007/BF01660031}
\end{botherref}
\endbibitem

\bibitem[\protect\citeauthoryear{Ajima et~al.}{2018}]{tufo_interconnect_d}
\begin{bchapter}
\bauthor{\bsnm{Ajima}, \binits{Y.}},
\bauthor{\bsnm{Kawashima}, \binits{T.}},
\bauthor{\bsnm{Okamoto}, \binits{T.}},
\bauthor{\bsnm{Shida}, \binits{N.}},
\bauthor{\bsnm{Hirai}, \binits{K.}},
\bauthor{\bsnm{Shimizu}, \binits{T.}},
\bauthor{\bsnm{Hiramoto}, \binits{S.}},
\bauthor{\bsnm{Ikeda}, \binits{Y.}},
\bauthor{\bsnm{Yoshikawa}, \binits{T.}},
\bauthor{\bsnm{Uchida}, \binits{K.}},
\bauthor{\bsnm{Inoue}, \binits{T.}}:
\bctitle{The tofu interconnect d}.
In: \bbtitle{2018 IEEE International Conference on Cluster Computing (CLUSTER)},
pp. \bfpage{646}--\blpage{654}
(\byear{2018}).
\doiurl{10.1109/CLUSTER.2018.00090}
\end{bchapter}
\endbibitem

\bibitem[\protect\citeauthoryear{Kim et~al.}{2008}]{dragonfly}
\begin{bchapter}
\bauthor{\bsnm{Kim}, \binits{J.}},
\bauthor{\bsnm{Dally}, \binits{W.}},
\bauthor{\bsnm{Scott}, \binits{S.}},
\bauthor{\bsnm{Abts}, \binits{D.}}:
\bctitle{Technology-driven, highly-scalable dragonfly topology},
vol. \bseriesno{36},
pp. \bfpage{77}--\blpage{88}
(\byear{2008}).
\doiurl{10.1109/ISCA.2008.19}
\end{bchapter}
\endbibitem

\bibitem[\protect\citeauthoryear{Flahive and Bose}{2010}]{Gaussian_and_EJ_bose}
\begin{barticle}
\bauthor{\bsnm{Flahive}, \binits{M.}},
\bauthor{\bsnm{Bose}, \binits{B.}}:
\batitle{The topology of gaussian and eisenstein-jacobi interconnection networks}.
\bjtitle{IEEE Transactions on Parallel and Distributed Systems}
\bvolume{21}(\bissue{8}),
\bfpage{1132}--\blpage{1142}
(\byear{2010})
\doiurl{10.1109/TPDS.2009.132}
\end{barticle}
\endbibitem

\bibitem[\protect\citeauthoryear{Martínez et~al.}{2006}]{2006Gaussian}
\begin{barticle}
\bauthor{\bsnm{Martínez}, \binits{C.}},
\bauthor{\bsnm{Vallejo}, \binits{E.}},
\bauthor{\bsnm{Beivide}, \binits{R.}},
\bauthor{\bsnm{Izu}, \binits{C.}},
\bauthor{\bsnm{Moretó}, \binits{M.}}:
\batitle{Dense gaussian networks: Suitable topologies for on-chip multiprocessors}.
\bjtitle{International Journal of Parallel Programming}
\bvolume{34},
\bfpage{193}--\blpage{211}
(\byear{2006})
\doiurl{10.1007/s10766-006-0014-1}
\end{barticle}
\endbibitem

\bibitem[\protect\citeauthoryear{Guo et~al.}{2009}]{bcube}
\begin{bchapter}
\bauthor{\bsnm{Guo}, \binits{C.}},
\bauthor{\bsnm{Lu}, \binits{G.}},
\bauthor{\bsnm{Li}, \binits{D.}},
\bauthor{\bsnm{Wu}, \binits{H.}},
\bauthor{\bsnm{Zhang}, \binits{X.}},
\bauthor{\bsnm{Shi}, \binits{Y.}},
\bauthor{\bsnm{Tian}, \binits{C.}},
\bauthor{\bsnm{Zhang}, \binits{Y.}},
\bauthor{\bsnm{Lu}, \binits{S.}}:
\bctitle{Bcube: a high performance, server-centric network architecture for modular data centers}.
In: \bbtitle{Proceedings of the ACM SIGCOMM 2009 Conference on Data Communication}.
\bsertitle{SIGCOMM '09},
pp. \bfpage{63}--\blpage{74}.
\bpublisher{Association for Computing Machinery},
\blocation{New York, NY, USA}
(\byear{2009}).
\doiurl{10.1145/1592568.1592577}
\end{bchapter}
\endbibitem

\bibitem[\protect\citeauthoryear{Bose et~al.}{1995}]{k_ary_n_cube}
\begin{barticle}
\bauthor{\bsnm{Bose}, \binits{B.}},
\bauthor{\bsnm{Broeg}, \binits{B.}},
\bauthor{\bsnm{Kwon}, \binits{Y.}},
\bauthor{\bsnm{Ashir}, \binits{Y.}}:
\batitle{Lee distance and topological properties of k-ary n-cubes}.
\bjtitle{IEEE Transactions on Computers}
\bvolume{44}(\bissue{8}),
\bfpage{1021}--\blpage{1030}
(\byear{1995})
\doiurl{10.1109/12.403718}
\end{barticle}
\endbibitem

\bibitem[\protect\citeauthoryear{Hayes and Mudge}{1989}]{hybercube}
\begin{barticle}
\bauthor{\bsnm{Hayes}, \binits{J.P.}},
\bauthor{\bsnm{Mudge}, \binits{T.}}:
\batitle{Hypercube supercomputers}.
\bjtitle{Proceedings of the IEEE}
\bvolume{77}(\bissue{12}),
\bfpage{1829}--\blpage{1841}
(\byear{1989})
\doiurl{10.1109/5.48826}
\end{barticle}
\endbibitem

\bibitem[\protect\citeauthoryear{Adiga et~al.}{2002}]{bluegene}
\begin{bchapter}
\bauthor{\bsnm{Adiga}, \binits{N.R.}},
\bauthor{\bsnm{Almasi}, \binits{G.}},
\bauthor{\bsnm{Almasi}, \binits{G.S.}},
\bauthor{\bsnm{Aridor}, \binits{Y.}},
\bauthor{\bsnm{Barik}, \binits{R.}},
\bauthor{\bsnm{Beece}, \binits{D.}},
\bauthor{\bsnm{Bellofatto}, \binits{R.}},
\bauthor{\bsnm{Bhanot}, \binits{G.}},
\bauthor{\bsnm{Bickford}, \binits{R.}},
\bauthor{\bsnm{Blumrich}, \binits{M.}},
\bauthor{\bsnm{Bright}, \binits{A.A.}},
\bauthor{\bsnm{Brunheroto}, \binits{J.}},
\bauthor{\bsnm{Cascaval}, \binits{C.}},
\bauthor{\bsnm{Castanos}, \binits{J.}},
\bauthor{\bsnm{Chan}, \binits{W.}},
\bauthor{\bsnm{Ceze}, \binits{L.}},
\bauthor{\bsnm{Coteus}, \binits{P.}},
\bauthor{\bsnm{Chatterjee}, \binits{S.}},
\bauthor{\bsnm{Chen}, \binits{D.}},
\bauthor{\bsnm{Yates}, \binits{K.}}:
\bctitle{An overview of the bluegene/l supercomputer},
pp. \bfpage{60}--\blpage{60}
(\byear{2002}).
\doiurl{10.1109/SC.2002.10017}
\end{bchapter}
\endbibitem

\bibitem[\protect\citeauthoryear{Shimizu}{2020}]{tufo}
\begin{bchapter}
\bauthor{\bsnm{Shimizu}, \binits{T.}}:
\bctitle{Supercomputer fugaku: Co-designed with application developers/researchers}.
In: \bbtitle{2020 IEEE Asian Solid-State Circuits Conference (A-SSCC)},
pp. \bfpage{1}--\blpage{4}
(\byear{2020}).
\doiurl{10.1109/A-SSCC48613.2020.9336127}
\end{bchapter}
\endbibitem

\bibitem[\protect\citeauthoryear{{TOP 500 Supercomputer Sites}}{}]{TOP500}
\begin{botherref}
\oauthor{\bsnm{{TOP 500 Supercomputer Sites}}}:
TOP 500 Supercomputer Sites.
\url{http://www.top500.org}.
Accessed: August 4, 2024
\end{botherref}
\endbibitem

\bibitem[\protect\citeauthoryear{Atchley et~al.}{2023}]{frontier}
\begin{bchapter}
\bauthor{\bsnm{Atchley}, \binits{S.}},
\bauthor{\bsnm{Zimmer}, \binits{C.}},
\bauthor{\bsnm{Lange}, \binits{J.}},
\bauthor{\bsnm{Bernholdt}, \binits{D.}},
\bauthor{\bsnm{Melesse~Vergara}, \binits{V.}},
\bauthor{\bsnm{Beck}, \binits{T.}},
\bauthor{\bsnm{Brim}, \binits{M.}},
\bauthor{\bsnm{Budiardja}, \binits{R.}},
\bauthor{\bsnm{Chandrasekaran}, \binits{S.}},
\bauthor{\bsnm{Eisenbach}, \binits{M.}},
\bauthor{\bsnm{Evans}, \binits{T.}},
\bauthor{\bsnm{Ezell}, \binits{M.}},
\bauthor{\bsnm{Frontiere}, \binits{N.}},
\bauthor{\bsnm{Georgiadou}, \binits{A.}},
\bauthor{\bsnm{Glenski}, \binits{J.}},
\bauthor{\bsnm{Grete}, \binits{P.}},
\bauthor{\bsnm{Hamilton}, \binits{S.}},
\bauthor{\bsnm{Holmen}, \binits{J.}},
\bauthor{\bsnm{Huebl}, \binits{A.}},
\bauthor{\bsnm{Jacobson}, \binits{D.}},
\bauthor{\bsnm{Joubert}, \binits{W.}},
\bauthor{\bsnm{Mcmahon}, \binits{K.}},
\bauthor{\bsnm{Merzari}, \binits{E.}},
\bauthor{\bsnm{Moore}, \binits{S.}},
\bauthor{\bsnm{Myers}, \binits{A.}},
\bauthor{\bsnm{Nichols}, \binits{S.}},
\bauthor{\bsnm{Oral}, \binits{S.}},
\bauthor{\bsnm{Papatheodore}, \binits{T.}},
\bauthor{\bsnm{Perez}, \binits{D.}},
\bauthor{\bsnm{Rogers}, \binits{D.M.}},
\bauthor{\bsnm{Schneider}, \binits{E.}},
\bauthor{\bsnm{Vay}, \binits{J.-L.}},
\bauthor{\bsnm{Yeung}, \binits{P.K.}}:
\bctitle{Frontier: Exploring exascale}.
In: \bbtitle{Proceedings of the International Conference for High Performance Computing, Networking, Storage and Analysis}.
\bsertitle{SC '23}.
\bpublisher{Association for Computing Machinery},
\blocation{New York, NY, USA}
(\byear{2023}).
\doiurl{10.1145/3581784.3607089}
\end{bchapter}
\endbibitem

\bibitem[\protect\citeauthoryear{{HPE}}{2022}]{slingshot}
\begin{botherref}
\oauthor{\bsnm{{HPE}}}:
HPE SLINGSHOT INTERCONNECT.
\url{https://www.hpe.com/us/en/compute/hpc/slingshot-interconnect.html}.
Accessed: August 4, 2024
(2022)
\end{botherref}
\endbibitem

\bibitem[\protect\citeauthoryear{Martínez et~al.}{2008}]{2008Gaussian}
\begin{barticle}
\bauthor{\bsnm{Martínez}, \binits{C.}},
\bauthor{\bsnm{Beivide}, \binits{R.}},
\bauthor{\bsnm{Stafford}, \binits{E.}},
\bauthor{\bsnm{Moreto}, \binits{M.}},
\bauthor{\bsnm{Gabidulin}, \binits{E.}}:
\batitle{Modeling toroidal networks with the gaussian integers}.
\bjtitle{Computers, IEEE Transactions on}
\bvolume{57},
\bfpage{1046}--\blpage{1056}
(\byear{2008})
\doiurl{10.1109/TC.2008.57}
\end{barticle}
\endbibitem

\bibitem[\protect\citeauthoryear{Pai et~al.}{2022}]{pai2022configuring}
\begin{barticle}
\bauthor{\bsnm{Pai}, \binits{K.-J.}},
\bauthor{\bsnm{Yang}, \binits{J.-S.}},
\bauthor{\bsnm{Chen}, \binits{G.-Y.}},
\bauthor{\bsnm{Chang}, \binits{J.-M.}}:
\batitle{Configuring protection routing via completely independent spanning trees in dense gaussian on-chip networks}.
\bjtitle{IEEE Transactions on Network Science and Engineering}
\bvolume{9}(\bissue{2}),
\bfpage{932}--\blpage{946}
(\byear{2022})
\end{barticle}
\endbibitem

\bibitem[\protect\citeauthoryear{Dally and Towles}{2004}]{dally2001}
\begin{bbook}
\bauthor{\bsnm{Dally}, \binits{W.J.}},
\bauthor{\bsnm{Towles}, \binits{B.P.}}:
\bbtitle{Principles and Practices of Interconnection Networks}.
\bpublisher{Morgan Kaufmann Publishers Inc.},
\blocation{San Francisco, CA, USA}
(\byear{2004})
\end{bbook}
\endbibitem

\bibitem[\protect\citeauthoryear{Alsaleh et~al.}{2015}]{alsalah}
\begin{barticle}
\bauthor{\bsnm{Alsaleh}, \binits{O.}},
\bauthor{\bsnm{Bose}, \binits{B.}},
\bauthor{\bsnm{Hamdaoui}, \binits{B.}}:
\batitle{One-to-many node-disjoint paths routing in dense gaussian networks}.
\bjtitle{The Computer Journal}
\bvolume{58}(\bissue{2}),
\bfpage{173}--\blpage{187}
(\byear{2015})
\doiurl{10.1093/comjnl/bxt142}
\end{barticle}
\endbibitem

\bibitem[\protect\citeauthoryear{Zhang et~al.}{2013}]{Zhang}
\begin{barticle}
\bauthor{\bsnm{Zhang}, \binits{Z.}},
\bauthor{\bsnm{Guo}, \binits{Z.}},
\bauthor{\bsnm{Yang}, \binits{Y.}}:
\batitle{Efficient all-to-all broadcast in gaussian on-chip networks}.
\bjtitle{IEEE Transactions on Computers}
\bvolume{62}(\bissue{10}),
\bfpage{1959}--\blpage{1971}
(\byear{2013})
\doiurl{10.1109/TC.2012.126}
\end{barticle}
\endbibitem

\bibitem[\protect\citeauthoryear{Touzene}{2015}]{touzene}
\begin{barticle}
\bauthor{\bsnm{Touzene}, \binits{A.}}:
\batitle{On all-to-all broadcast in dense gaussian network on-chip}.
\bjtitle{IEEE Transactions on Parallel and Distributed Systems}
\bvolume{26}(\bissue{4}),
\bfpage{1085}--\blpage{1095}
(\byear{2015})
\doiurl{10.1109/TPDS.2014.2314689}
\end{barticle}
\endbibitem

\bibitem[\protect\citeauthoryear{Shamaei et~al.}{2014}]{arash}
\begin{bchapter}
\bauthor{\bsnm{Shamaei}, \binits{A.}},
\bauthor{\bsnm{Bose}, \binits{B.}},
\bauthor{\bsnm{Flahive}, \binits{M.}}:
\bctitle{Higher dimensional gaussian networks}.
In: \bbtitle{2014 IEEE International Parallel and Distributed Processing Symposium Workshops},
pp. \bfpage{1438}--\blpage{1447}
(\byear{2014}).
\doiurl{10.1109/IPDPSW.2014.161}
\end{bchapter}
\endbibitem

\bibitem[\protect\citeauthoryear{Hussain et~al.}{2017}]{zaid2017}
\begin{barticle}
\bauthor{\bsnm{Hussain}, \binits{Z.}},
\bauthor{\bsnm{AlBdaiwi}, \binits{B.}},
\bauthor{\bsnm{Cerny}, \binits{A.}}:
\batitle{Node-independent spanning trees in gaussian networks}.
\bjtitle{Journal of Parallel and Distributed Computing}
\bvolume{109},
\bfpage{324}--\blpage{332}
(\byear{2017})
\doiurl{10.1016/j.jpdc.2017.06.018}
\end{barticle}
\endbibitem

\bibitem[\protect\citeauthoryear{Chang et~al.}{2015}]{Chang2015}
\begin{barticle}
\bauthor{\bsnm{Chang}, \binits{Y.-H.}},
\bauthor{\bsnm{Yang}, \binits{J.-S.}},
\bauthor{\bsnm{Chang}, \binits{J.-M.}},
\bauthor{\bsnm{Wang}, \binits{Y.-L.}}:
\batitle{A fast parallel algorithm for constructing independent spanning trees on parity cubes}.
\bjtitle{Applied Mathematics and Computation}
\bvolume{268},
\bfpage{489}--\blpage{495}
(\byear{2015})
\doiurl{10.1016/j.amc.2015.06.081}
\end{barticle}
\endbibitem

\bibitem[\protect\citeauthoryear{Yang et~al.}{2015}]{Yang2015}
\begin{barticle}
\bauthor{\bsnm{Yang}, \binits{J.}},
\bauthor{\bsnm{Wu}, \binits{M.}},
\bauthor{\bsnm{Chang}, \binits{J.}}, \betal:
\batitle{A fully parallelized scheme of constructing independent spanning trees on möbius cubes}.
\bjtitle{The Journal of Supercomputing}
\bvolume{71},
\bfpage{952}--\blpage{965}
(\byear{2015})
\doiurl{10.1007/s11227-014-1346-z}
\end{barticle}
\endbibitem

\bibitem[\protect\citeauthoryear{Fragopoulou and Akl}{1996}]{star}
\begin{barticle}
\bauthor{\bsnm{Fragopoulou}, \binits{P.}},
\bauthor{\bsnm{Akl}, \binits{S.G.}}:
\batitle{Edge-disjoint spanning trees on the star network with applications to fault tolerance}.
\bjtitle{IEEE Transactions on Computers}
\bvolume{45}(\bissue{2}),
\bfpage{174}--\blpage{185}
(\byear{1996})
\doiurl{10.1109/12.485370}
\end{barticle}
\endbibitem

\bibitem[\protect\citeauthoryear{AlBdaiwi et~al.}{2016}]{bader2016}
\begin{barticle}
\bauthor{\bsnm{AlBdaiwi}, \binits{B.}},
\bauthor{\bsnm{Hussain}, \binits{Z.}},
\bauthor{\bsnm{Cerny}, \binits{A.}},
\bauthor{\bsnm{Aldred}, \binits{R.}}:
\batitle{Edge-disjoint node-independent spanning trees in dense gaussian networks}.
\bjtitle{The Journal of Supercomputing}
\bvolume{72}(\bissue{12}),
\bfpage{4718}--\blpage{4736}
(\byear{2016})
\doiurl{10.1007/s11227-016-1768-x}
\end{barticle}
\endbibitem

\bibitem[\protect\citeauthoryear{Hasunuma}{2001}]{hasunuma2001}
\begin{barticle}
\bauthor{\bsnm{Hasunuma}, \binits{T.}}:
\batitle{Completely independent spanning trees in the underlying graph of a line digraph}.
\bjtitle{Discrete Mathematics}
\bvolume{234}(\bissue{1}),
\bfpage{149}--\blpage{157}
(\byear{2001})
\doiurl{10.1016/S0012-365X(00)00377-0}
\end{barticle}
\endbibitem

\bibitem[\protect\citeauthoryear{Hasunuma}{2002}]{hasunuma2002}
\begin{bchapter}
\bauthor{\bsnm{Hasunuma}, \binits{T.}}:
\bctitle{Completely independent spanning trees in maximal planar graphs}.
In: \beditor{\bsnm{Goos}, \binits{G.}},
\beditor{\bsnm{Hartmanis}, \binits{J.}},
\beditor{\bsnm{Leeuwen}, \binits{J.}},
\beditor{\bsnm{Kučera}, \binits{L.}} (eds.)
\bbtitle{Graph-Theoretic Concepts in Computer Science. WG 2002. Lecture Notes in Computer Science}.
\bsertitle{Lecture Notes in Computer Science},
vol. \bseriesno{2573},
pp. \bfpage{228}--\blpage{241}.
\bpublisher{Springer}, \blocation{???}
(\byear{2002}).
\doiurl{10.1007/3-540-36379-3_21}
\end{bchapter}
\endbibitem

\bibitem[\protect\citeauthoryear{Moinet et~al.}{2017}]{cist_ad_hoc}
\begin{bchapter}
\bauthor{\bsnm{Moinet}, \binits{A.}},
\bauthor{\bsnm{Darties}, \binits{B.}},
\bauthor{\bsnm{Gastineau}, \binits{N.}},
\bauthor{\bsnm{Baril}, \binits{J.-L.}},
\bauthor{\bsnm{Togni}, \binits{O.}}:
\bctitle{Completely independent spanning trees for enhancing the robustness in ad-hoc networks}.
In: \bbtitle{2017 IEEE 13th International Conference on Wireless and Mobile Computing, Networking and Communications (WiMob)},
pp. \bfpage{63}--\blpage{70}
(\byear{2017}).
\doiurl{10.1109/WiMOB.2017.8115791}
\end{bchapter}
\endbibitem

\bibitem[\protect\citeauthoryear{Wang et~al.}{2022}]{cistLine}
\begin{barticle}
\bauthor{\bsnm{Wang}, \binits{Y.}},
\bauthor{\bsnm{Cheng}, \binits{B.}},
\bauthor{\bsnm{Qian}, \binits{Y.}},
\bauthor{\bsnm{Wang}, \binits{D.}}:
\batitle{Constructing completely independent spanning trees in a family of line-graph-based data center networks}.
\bjtitle{IEEE Transactions on Computers}
\bvolume{71}(\bissue{5}),
\bfpage{1194}--\blpage{1203}
(\byear{2022})
\doiurl{10.1109/TC.2021.3077587}
\end{barticle}
\endbibitem

\bibitem[\protect\citeauthoryear{Chen et~al.}{2021}]{cist_augmentedCube}
\begin{barticle}
\bauthor{\bsnm{Chen}, \binits{G.}},
\bauthor{\bsnm{Cheng}, \binits{B.}},
\bauthor{\bsnm{Wang}, \binits{D.}}:
\batitle{Constructing completely independent spanning trees in data center network based on augmented cube}.
\bjtitle{IEEE Transactions on Parallel and Distributed Systems}
\bvolume{32}(\bissue{3}),
\bfpage{665}--\blpage{673}
(\byear{2021})
\doiurl{10.1109/TPDS.2020.3029654}
\end{barticle}
\endbibitem

\bibitem[\protect\citeauthoryear{Li et~al.}{2022}]{cist_bccc}
\begin{barticle}
\bauthor{\bsnm{Li}, \binits{X.-Y.}},
\bauthor{\bsnm{Lin}, \binits{W.}},
\bauthor{\bsnm{Liu}, \binits{X.}},
\bauthor{\bsnm{Lin}, \binits{C.-K.}},
\bauthor{\bsnm{Pai}, \binits{K.-J.}},
\bauthor{\bsnm{Chang}, \binits{J.-M.}}:
\batitle{Completely independent spanning trees on bccc data center networks with an application to fault-tolerant routing}.
\bjtitle{IEEE Transactions on Parallel and Distributed Systems}
\bvolume{33}(\bissue{8}),
\bfpage{1939}--\blpage{1952}
(\byear{2022})
\doiurl{10.1109/TPDS.2021.3133595}
\end{barticle}
\endbibitem

\bibitem[\protect\citeauthoryear{Pai et~al.}{2019}]{cist_crossedcubes}
\begin{bchapter}
\bauthor{\bsnm{Pai}, \binits{K.-J.}},
\bauthor{\bsnm{Chang}, \binits{R.-S.}},
\bauthor{\bsnm{Wu}, \binits{R.-Y.}},
\bauthor{\bsnm{Chang}, \binits{J.-M.}}:
\bctitle{Three completely independent spanning trees of crossed cubes with application to secure-protection routing}.
In: \bbtitle{2019 IEEE 21st International Conference on High Performance Computing and Communications; IEEE 17th International Conference on Smart City; IEEE 5th International Conference on Data Science and Systems (HPCC/SmartCity/DSS)},
pp. \bfpage{1358}--\blpage{1365}
(\byear{2019}).
\doiurl{10.1109/HPCC/SmartCity/DSS.2019.00189}
\end{bchapter}
\endbibitem

\bibitem[\protect\citeauthoryear{Hagberg et~al.}{2008}]{hagberg2008exploring}
\begin{botherref}
\oauthor{\bsnm{Hagberg}, \binits{A.}},
\oauthor{\bsnm{Swart}, \binits{P.J.}},
\oauthor{\bsnm{Schult}, \binits{D.A.}}:
Exploring network structure, dynamics, and function using networkx.
Technical report,
Los Alamos National Laboratory (LANL), Los Alamos, NM (United States)
(2008)
\end{botherref}
\endbibitem

\end{thebibliography}

\end{document}